\documentclass[journal, twocolumn]{IEEEtran}

\IEEEoverridecommandlockouts 

\hyphenation{op-tical net-works semi-conduc-tor}

\IEEEoverridecommandlockouts 
\usepackage[utf8]{inputenc}
\usepackage{ifthen}
\usepackage{amsthm}
\usepackage{amssymb,amsfonts}
\usepackage[ruled,linesnumbered]{algorithm2e}
\usepackage{cite}
\usepackage{graphicx}
\usepackage{textcomp}
\usepackage{subfigure}
\usepackage{xcolor}
\usepackage{url}
\usepackage{caption}
\usepackage{threeparttable}

\usepackage{footnote}
\usepackage{multirow}
\usepackage{array}
\usepackage[T1]{fontenc}
\usepackage{setspace}
\usepackage[cmintegrals]{newtxmath}
%

\newtheorem{lemma}{Lemma}

\ifCLASSINFOpdf

\else

\fi

\begin{document}

\title{
Graph Attention-based Decentralized Actor-Critic for Dual-Objective Control of Multi-UAV Swarms
}

\author{\IEEEauthorblockN{Haoran~Peng,~\IEEEmembership{Member,~IEEE}, and Ying-Jun Angela Zhang,~\IEEEmembership{Fellow,~IEEE}}
\thanks{
The preliminary version of this article has been submitted to the 2025 IEEE Global Communications Conference for consideration.

Haoran Peng, and Ying-Jun Angela Zhang are with the Department of Information Engineering, The Chinese University of Hong Kong, Shatin, N.T., Hong Kong, P.R.China (Email:~\{hrpeng, yjzhang\}@ie.cuhk.edu.hk). 

Corresponding author: Ying-Jun Angela Zhang.
}
}

\markboth{}%
{Shell \MakeLowercase{\textit{et al.}}: Bare Demo of IEEEtran.cls for IEEE Communications Society Journals}
\maketitle

\begin{abstract}
This research focuses on optimizing multi-UAV systems with dual objectives: maximizing service coverage as the primary goal while extending battery lifetime as the secondary objective.
We propose a Graph Attention-based Decentralized Actor-Critic (GADC) to optimize the dual objectives. 
The proposed approach leverages a graph attention network to process UAVs' limited local observation and reduce the dimension of the environment states. 
Subsequently, an actor-double-critic network is developed to manage dual policies for joint objective optimization.
The proposed GADC uses a Kullback-Leibler (KL) divergence factor to balance the tradeoff between coverage performance and battery lifetime in the multi-UAV system.
We assess the scalability and efficiency of GADC through comprehensive benchmarking against state-of-the-art methods, considering both theory and experimental aspects.
Extensive testing in both ideal settings and NVIDIA Sionna's realistic ray tracing environment demonstrates GADC's superior performance.
\end{abstract}

\begin{IEEEkeywords}
Unmanned aerial vehicle, graph attention network, multi-agent reinforcement learning.
\end{IEEEkeywords}

\section{Introduction}
\label{Sec: Intro}
\IEEEPARstart{C}onventional uncrewed aerial vehicle (UAV) systems rely on direct connections between UAVs and ground user terminals (UTs). 
This architecture is inherently limited by the communication range and battery capacity of the UAVs~\cite{9697395,9687317}.
The limited communication range results in localized observation and non-uniform service coverage~\cite{9763515}. 
As shown in Fig.~\ref{sionna2}, UAVs handling heavy service loads experience rapid energy depletion, significantly reducing the multi-UAV network's operational lifetime~\cite{9763515}, defined as the duration until the first UAV exhausts its power~\cite{1687734}. 
This challenge necessitates the development of a multi-UAV navigation strategy with dual objectives: maximizing the UAV coverage as the primary objective and prolonging the UAV network lifetime as the secondary objective~\cite{9697395}.
In particular, maximum coverage navigation aims to maximize the number of UTs covered across all time steps. 
Likewise, maximum lifetime navigation aims to equalize residual onboard energy across UAVs, accounting for the UAVs' movement, UAV-UAV communications, and UAV-UTs service loads.

Optimizing the dual objectives—maximizing UAV coverage while extending UAV network lifetime—requires careful management of their inherent trade-offs. 
Meanwhile, the decentralized nature of UAVs presents the challenge in acquiring global information across the multi-UAV network, complicating the development of a collaborative navigation strategy among all UAVs. 

\begin{figure}[!tbp]
\centerline{\includegraphics[width=.5\textwidth]{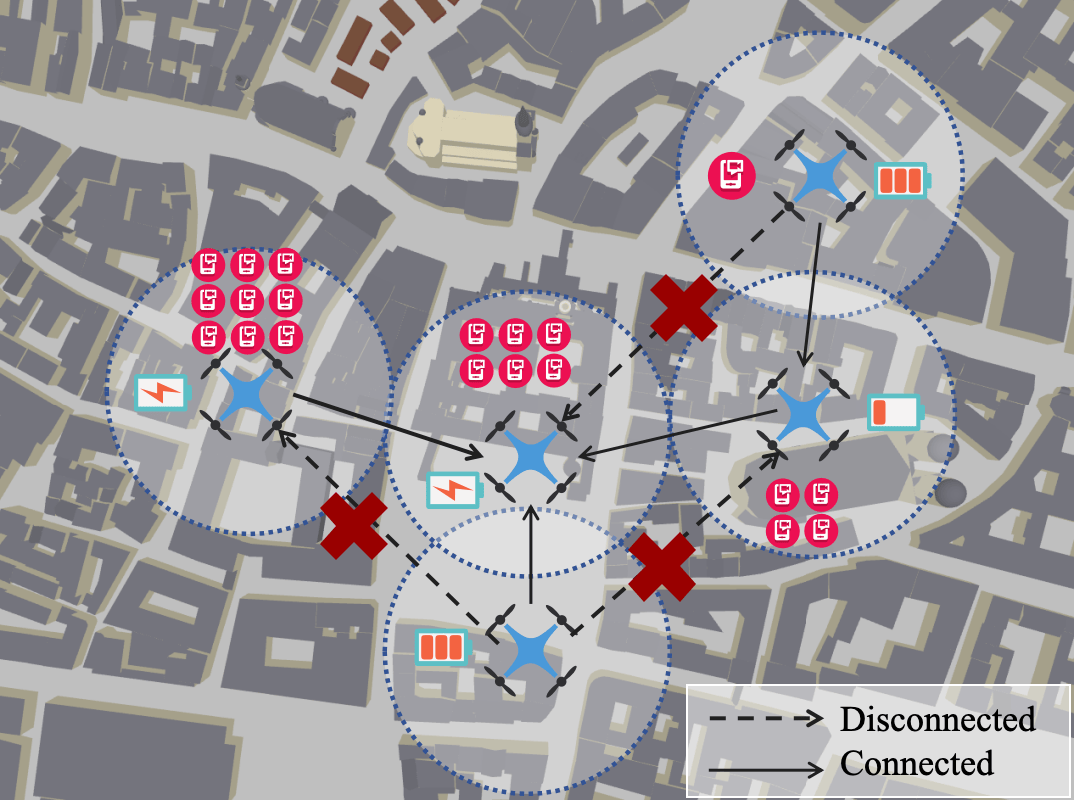}}
\caption{
A multi-UAV system serves ground UTs.
}
\label{sionna2}
\end{figure}

Recent studies have employed stacked graph layers to enhance UAV exploration in unobserved environments through message passing. 
This approach enables dynamic interaction between each UAV's actions and their neighboring UAVs' local observations in subsequent time slots~\cite{9697395}.
As shown in Fig.~\ref{fig:two-layer}, UAVs can aggregate observation information from multi-hop neighbors through stacked graph layers.
These complex UAV-UT-environment interactions can be modeled as a Markov Decision Process (MDP)~\cite{9892688}.
Based on the MDP formulation, existing work has extensively applied multi-agent reinforcement learning (MARL) to navigate and control a swarm of UAVs in dynamic and time-varying environments.
However, simultaneously maximizing coverage and lifetime poses a significant challenge on traditional MARL frameworks.
The pursuit of lifetime maximization often comes at the cost of reduced coverage efficiency.
To handle dual objectives, existing studies typically combine both objectives into a weighted sum, where the weighting coefficient must be carefully tuned to balance the priorities of the two objectives.
Determining an appropriate weighting coefficient is a time-consuming task that often requires heuristic and exhaustive search. 
The task becomes particularly challenging when the two objectives operate at different numerical scales, as proper coefficient calibration is crucial to prevent the larger-valued objective from disproportionately influencing the solutions, regardless of its intended priority~\cite{pmlr-v37-schulman15}. 
\begin{figure}
    \centering
\includegraphics[width=.5\textwidth]{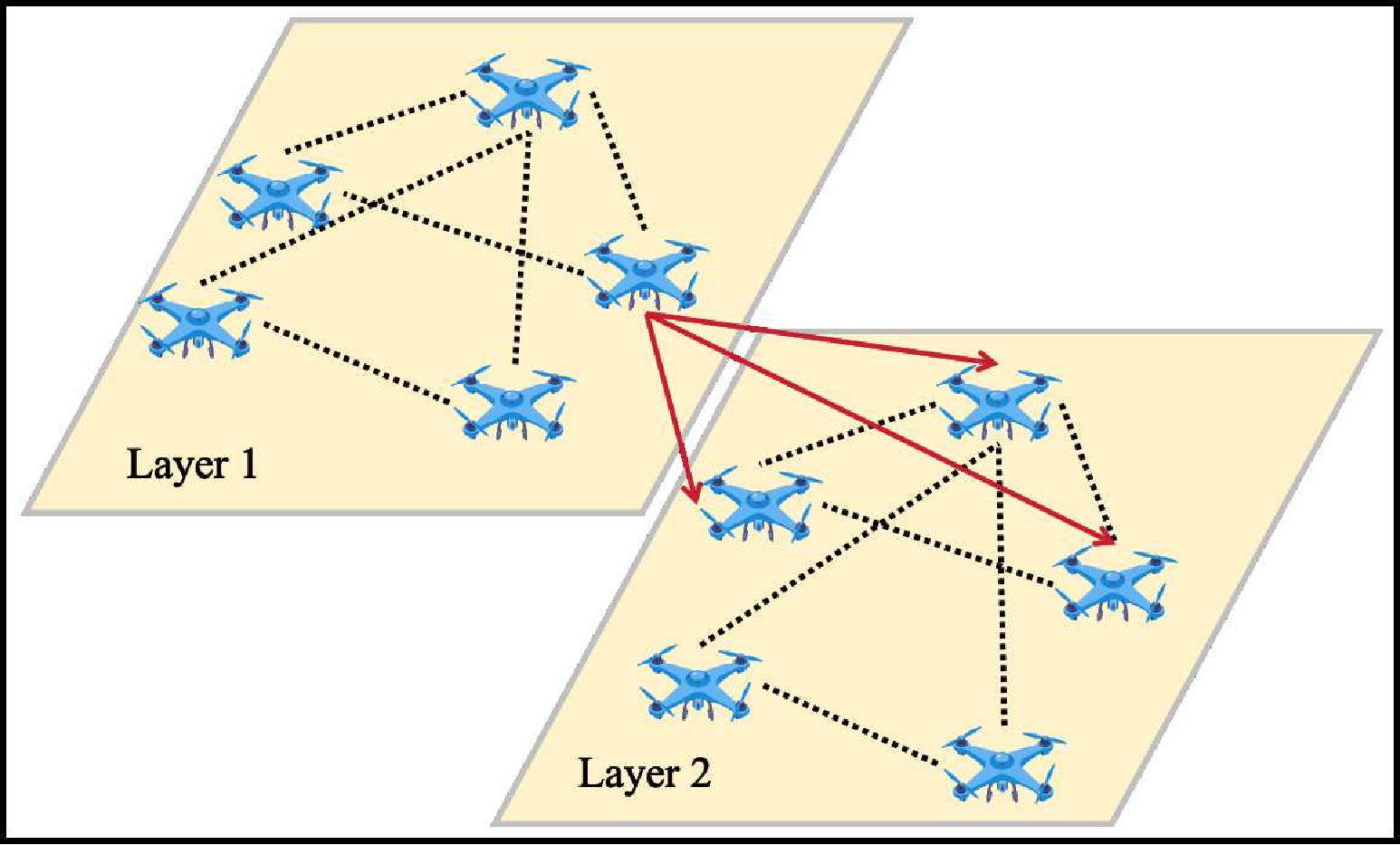}
    \caption{The information aggregation through message passing.}
    \label{fig:two-layer}
\end{figure}

In contrast to conventional weighted-sum methods that struggle with scale disparities, we propose a Graph Attention-based Decentralized Actor-Critic (GADC) to separately optimize the prioritized dual objectives while aggregating information from neighboring UAVs (nodes) using a graph attention network (GAT) with multihead convolutional kernels.  
The architecture employs two specialized critic networks: an {\em coverage-effectiveness critic} for coverage optimization and a {\em lifetime-aware critic} for maximizing operational duration.
The coverage-effectiveness critic primarily focuses on maximizing the coverage of UTs by the multi-UAV system, while the lifetime-aware critic serves as a secondary critic, aiming to extend the UAV system's battery life without compromising the coverage performance beyond a controllable threshold. 
The coverage-effectiveness critic and the lifetime-aware critic iteratively guide the training process of the dual-objective policy.
Specifically, we impose a stringent constraint on the Kullback-Leibler (KL) divergence to restrict the adjustment of the lifetime-aware updating within a trust region bounded by the coverage-effectiveness strategy, using the probability ratio between these policies to enforce this constraint.
GADC significantly enhances dual-objective optimization, as elucidated in~\cite{DBLP:journals/corr/SchulmanWDRK17}:
\begin{itemize}
\item Each objective is optimized via a specialized neural network and corresponding objective function, thereby facilitating rapid convergence.
\item The maximization of UAVs' lifespan is judiciously constrained by a KL divergence factor linked to the primary objective, so that lifespan extension does not compromise the coverage performance beyond a controllable threshold.
In contrast to the traditional approach that optimizes a weighted sum of both objectives, our approach prevents the larger-valued objective from dominating the solution space.
\end{itemize}

\textbf{Contributions---}
This study investigates a multi-UAV system that prioritizes maximizing service coverage as the primary objective while also seeking to extend battery life. 
To address the decentralized nature of UAV operation, we propose an innovative GADC solution that leverages GAT layers to aggregate information from neighboring UAVs and jointly optimize the prioritized dual objectives through two specialized critic networks. 
Our contributions are summarized as follows.
\begin{itemize}
\item {\em Multi-layer GAT for Enhanced Environmental Understanding:} 
We introduce a GADC that aggregates information from neighboring UAVs (nodes) and investigates unobserved areas using multiple GAT layers. 
This enables comprehensive environmental data assimilation beyond immediate observational coverage.

\item {\em Separate Critic Networks for Efficient Dual-Objective Optimization:}
The proposed GADC ensures rapid convergence of the dual-objective policy to the near-optimal solution of respective objectives within a defined trust region. 
The KL divergence factor, based on the probability ratio of the two strategies, effectively prevents the larger-valued objective from overshadowing the solution.
We derive the performance upper and lower bounds of the proposed GADC framework.

\item 
{\em Real-World Validation through Digital Twin:} 
We integrate ray tracing between UAVs and UTs to enhance the real-world applicability of the trained model. 
The GADC demonstrates superior performance in NVIDIA Sionna's Munich City digital twin environment, achieving 50\% faster convergence and outperforming conventional approaches in both service coverage and system lifetime.
Simulations show that GADC improves the convergence speed by $50\%$ and outperforms the traditional algorithm in both service coverage and system lifetime.
\end{itemize}

The rest of the paper is structured as follows.
Section~\ref{relatedwork} reviews the state-of-the-art literature.
Section~\ref{Sec: System Model} presents the system model.
The system operation and the proposed GADC are detailed in Section~\ref{Sec:DualAC}.
Section~\ref{numerical result} gives numerical results.
Finally, Section~\ref{Conclusions} concludes the paper.
\section{Related Works}
\label{relatedwork}
\subsection{UAV-assisted  Cellular Connectivity}
Owing to their exceptional mobility, UAVs enable rapid connectivity in communication-challenged and ad hoc environments \cite{10051712,10102429}. 
By integrating mobile edge computing (MEC) devices, UAV systems can significantly expand service coverage while alleviating the demands placed on base stations (BSs) \cite{10.1145/3653451}. 
However, limited onboard energy constitutes a critical challenge in the design of UAV systems, directly affecting their operational lifespan. 
To address this issue, \cite{9814972} introduced a UAV-aided system that jointly optimizes the energy harvesting strategy, UAV trajectory, and allocation of computing resources to minimize energy consumption while maximizing system throughput.
Meanwhile, multi-UAV systems have emerged as essential solutions for complex tasks \cite{10081090}. 
For instance, the system discussed in \cite{10438999} employs multiple UAVs as aerial relays to enhance UT-satellite connectivity. 
Building on this, \cite{10243608} improves and secures uplink transmissions between UTs and satellites while adhering to stringent constraints on transmission power. 
Despite these advances, a significant challenge remains in the design of cooperative navigation and control for multi-UAV systems, as each UAV is confined to local observations and lacks a comprehensive understanding of the broader environment.

\subsection{Multi-UAV Control and  Navigation}
Early studies focused on centralized control of multi-UAV systems. 
For example, \cite{9446301} developed a centralized optimization algorithm for UAV swarm management in multi-task tracking, considering the constraints of limited budgets and safe flight distances.
To enhance network security and performance, \cite{9437802} established a wireless mesh network among UAVs using a graph traversal and pathfinding algorithm for efficient communication and cryptographic techniques for data protection. 
Additionally, \cite{8432464} investigates RL in centralized control to balance service coverage and energy efficiency in multi-UAV systems.
However, centralized control and navigation may be expensive in ad hoc scenarios due to the need for an infrastructure. 
Consequently, \cite{10197291} and \cite{10254323} employed MARL to facilitate multi-UAV navigation through individual UAV learning within local environments, while maintaining cooperative decision-making. 
Furthermore, \cite{9697395} leveraged GAT layers to enhance MARL's ability to acquire global information from neighboring UAVs, improving navigation efficiency.

While MARL marks a significant advance in decentralized UAV operations by enabling real-time decision-making, its approach to dual-objective optimization relies heavily on heuristic methods and requires substantial expertise in managing objective trade-offs.

\begin{figure*}[t]
	\centering
	\subfigure[The open area scenario.] {
    	\includegraphics[width=.48\textwidth]{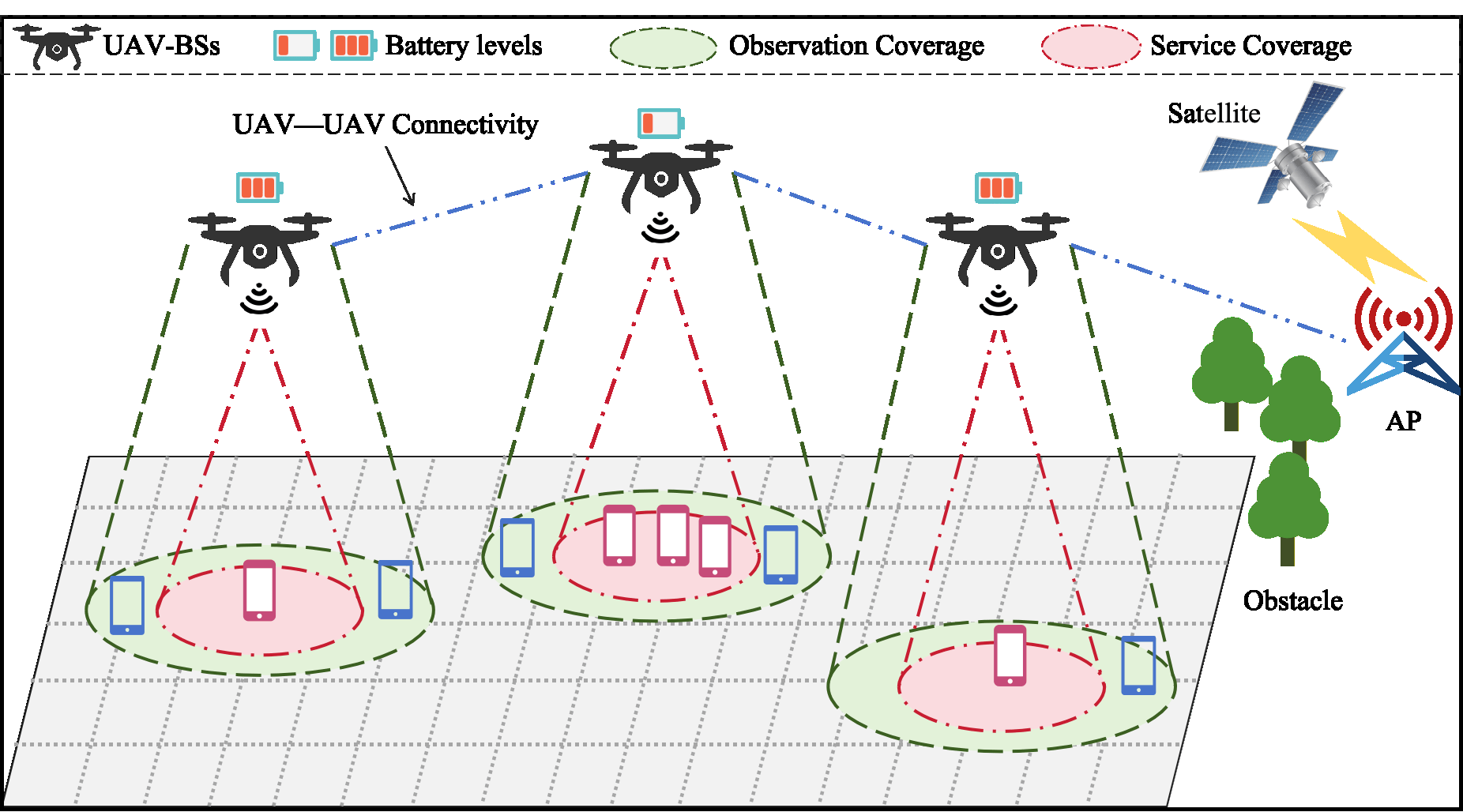}
        	\label{idealarea}
	 }
	 \subfigure[The urban scenario.] {
        	\includegraphics[width=.48\textwidth]{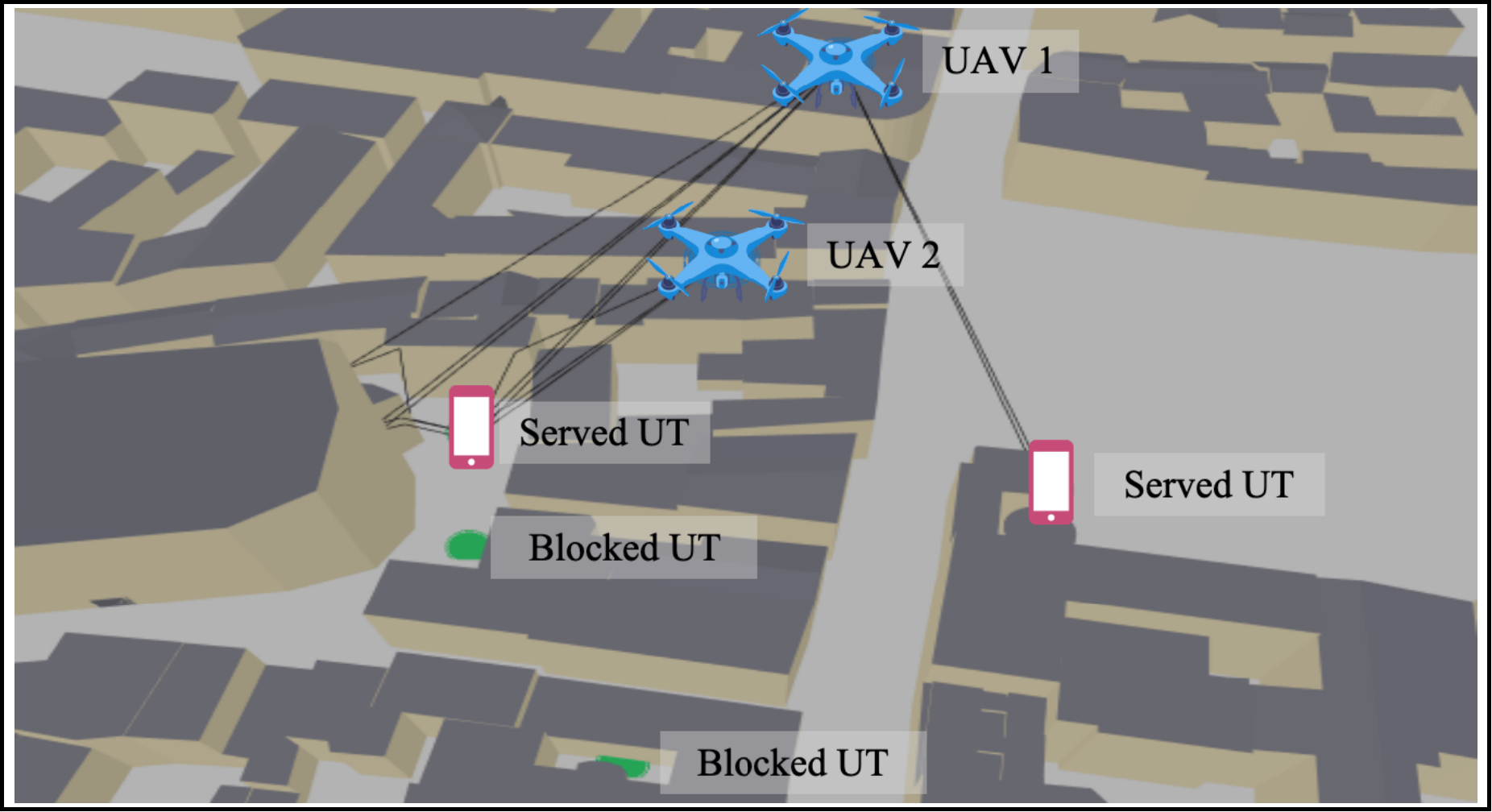}
        	\label{ubranarea}
	 }
	 \caption{
	 The illustration of observation and service coverage of the multi-UAV system.
	 }
  \label{scenario_setting}
\end{figure*}

\begin{figure}[!t]
    \centering
\includegraphics[width=.5\textwidth]{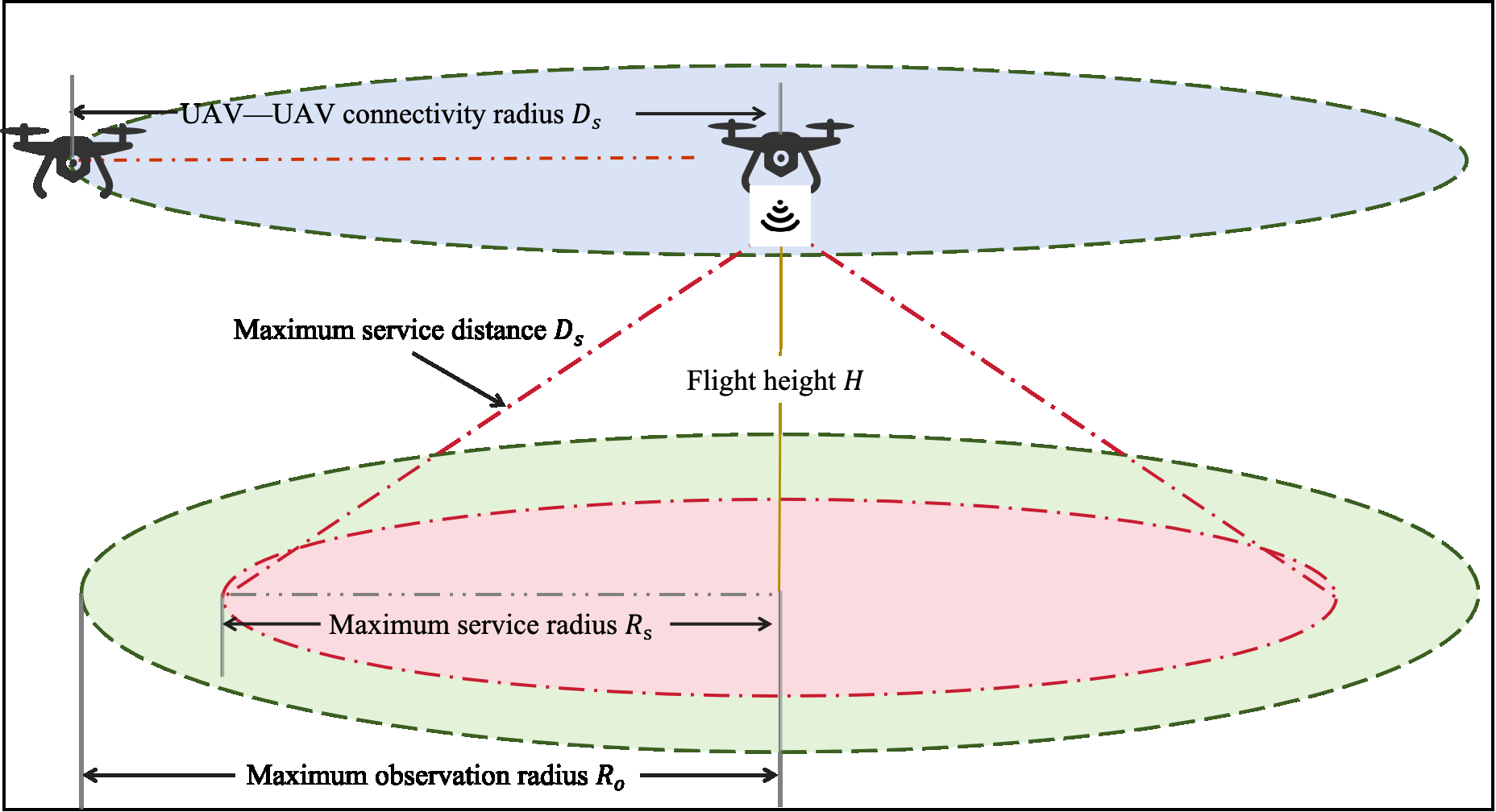}
    \caption{The observation, service, and connectivity distance.}
    \label{fig:coverage_radius}
\end{figure}

\section{System Model}
\label{Sec: System Model}
Suppose that $M$ UTs and $N$ UAVs are randomly distributed in a space with dimensions $E\times E$ units.
Each UAV and UT is equipped with a single antenna. 
The heights of UAVs' antennas and UTs' antennas' are fixed.
The locations of the UTs remain unchanged within an episode and vary from one episode to another. 
Suppose that UAVs function as flying BSs to serve the UTs and are navigated according to the locations of UTs.
The system maintains two types of communications: UAV-UAV links for information exchange between neighboring UAVs and UAV-UT links for service provisioning.
Table~\ref{tab:Notations} summarizes the key notations used throughout this paper.

\begin{table}[!t]
    \centering
    \caption{Glossary of Notations}
    \begin{tabular}{c|l}
    \hline \hline
      \bfseries      \textbf{Symbol} & \textbf{Description}   \\ \hline \hline
      $M$ & The number of UTs \\\hline
      $N$ & The number of UAVs  \\\hline
        $D_{s}$ & The maximum communication distance \\\hline
        $R_{s}$ & The maximum horizontal service coverage\\\hline
        $H$ & The altitude of UAV hovering \\\hline
        $T$ & The number of total time slots \\\hline
        $c_{n}(t)$ & The number of UTs within the service \\&  coverage of the UAV $n$ \\\hline
        $\omega_{m,n}^{s}(t)$ & The service coverage indicator of the UT $n$\\ &and the UAV $m$ \\\hline
        $G_{n}$ & The undirected one-hop graph of the UAV $n$\\\hline
        $L$ & The set of edges in graph $G_{n}$ \\ \hline
        $l_{n,i}$ & The transmission link from the UAV-BS $n$ \\ &to a neighboring UAV-BS $i$ \\ \hline
        $\mathcal{A}$ & The adjacency matrix of graph $G_{n}$ \\\hline
        $\mathcal{V}$ & The set of nodes in graph $G_{n}$ \\\hline
        $\Delta t$ & The duration of a time slot \\\hline
        $p_{n}^{s}$ & The transmit power for the link from \\&UAV $n$ to any UTs it serves\\\hline
        $\rho_{n,i}$ & The energy consumed on transmission \\&from UAV $n$ to UAV $i$\\\hline
        $\chi_{h}$ & The hovering energy consumption for a UAV \\\hline
        $B_{n}(t)$ & The residual energy of the UAV $n$ \\&at the end of time slot $t$\\\hline
        $T_{net}$ & The network lifetime of the UAV-BS system \\\hline
        $\Gamma_{n,i}$ & The transmission rate of $l_{n,i}$ \\\hline
        $h_{n}(t)$ & The hidden state for UAV $n$ at the \\ & time slot $t$ in the memory unit \\\hline
        $g_{n}$ & The self-attention operation for UAV $n$ \\\hline
    \end{tabular}
    \label{tab:Notations}
    \vspace{-10pt}
\end{table}

\subsection{UAV Observation Range, Service Range, and Communication Range}
Suppose that UAV-UT communications occur through a time-varying multipath channel characterized by Rayleigh fading and path loss.
A UT is said to be observable by a UAV only if the received channel power exceeds a threshold $\sigma_{o}$, and is said to be within the service range of a UAV if the received channel power surpasses $\sigma_{s}$, where $\sigma_{s} \ge \sigma_{o}$.
In an open area, the channel power thresholds of $\sigma_{s}$ and $\sigma_{o}$ translate to a maximum horizontal service coverage radius $R_s$ and an observation coverage radius $R_o$, respectively, as illustrated in Fig.~\ref{idealarea}.
Fig. \ref{fig:coverage_radius} shows that $R_s$ is given by
\begin{equation}
    \begin{aligned}
        R_{s} = \sqrt{D_{s}^{2}-H^{2}},
    \end{aligned}
\end{equation}
where $D_{s}$ is the distance at which the received power equals $\sigma_{s}$, and $H$ denotes the UAV operation height.
Conversely, urban environments, obstacles like buildings and trees can block transmission links between a UT and UAVs, even when the UT is close to a UAV, as illustrated in Fig.~\ref{ubranarea}. 
Consequently, a UT's observability or serviceability by a UAV depends solely on the received power level relative to thresholds $\sigma_{o}$ and $\sigma_{s}$, rather than the geometric distance.

Suppose that each UT is served by one UAV only, even if it is within the service range of multiple UAVs. 
A UT becomes isolated if it is not within the service range of any UAVs. 
The selection of the serving UAV is influenced by the residual energy of the UAVs, the channel state of the propagation paths, the UAV's mobility cost, and other criteria detailed in Section~\ref{Sec:DualAC}.
The number of UTs served by the $n^{th}$ UAV during time slot $t$ is:
\begin{equation}
    \begin{aligned}
    c_{n}(t) = \sum_{m=1}^{M}\omega_{m,n}^{s}(t),
    \end{aligned}
\end{equation}
where $\omega_{m,n}^{s}(t) = 1$ indicates that UT $m$ is served by the $n^{th}$ UAV during time slot $t$ and $\omega_{m,n}^{s}(t) = 0$ otherwise.
$\sum_{n=1}^{N}\omega_{m,n}^{s}(t)\leq 1$, $\forall m$ indicates that a UT is served by at most one UAV during a time slot.

For UAV-UAV communications, each UAV can establish a direct line-of-sight (LoS) link with other UAVs within a connectivity distance threshold of $D_s$, as illustrated in Fig. \ref{fig:coverage_radius}.
Define a connectivity matrix $\mathcal{A}$, where $\mathcal{A}(n,i)=1$ when a direct connection exists between UAV $i$ and UAV $n$; and $\mathcal{A}(n,i)=0$ otherwise.

\subsection{The UAV energy consumption}
A UAV's energy consumption consists of four components. 
First, navigation energy $\kappa \eta_{n}(t)$ is linearly proportional to the flight distance, where $\eta_{n}(t)$ represents UAV $n$'s horizontal movement during time slot $t$, and $\kappa$ is a constant coefficient. 
Second, energy consumption for ground services during time slot $t$, given by $\Delta t c_{n}^{s}(t)p_{n}^{s}$, is proportional to the time slot duration $\Delta t$, the number of served UTs $c_{n}^{s}(t)$, and UAV $n$'s fixed transmit power $p_{n}^{s}$ to the UTs.
Third, suppose that each pair of neighboring UAVs (i.e., UAVs with direct communication connection) have one round of communication in each time slot to exchange their local observation data.
Denote by $\rho_{n, i}(t)$ the inter-UAV communication energy thus consumed on transmission from UAV $n$ to a neighboring UAV $i(i\neq n)$.
Last, a constant hovering energy $\chi_{h}$ for a UAV is consumed per time slot.
The residual battery energy $b_n(t)$ of the UAV $n$ at the end of the $t^{th}$ time slot is
\begin{equation}
    \begin{aligned}
    b_n(t) =  b_n(t-1) - \left(\chi_{h} + \sum_{i\in \mathcal{I}_{n}(t)}\rho_{n,i}(t) + \kappa \eta_{n}(t) + \Delta t c_{n}^{s}(t)p_{n}^{s}\right),
    \end{aligned}
\end{equation}
where $b_{n}(t-1)$ is the residual energy of UAV $n$ at the end of the previous time slot.
$\mathcal{I}_{n}(t)$ is the set of neighboring UAVs of UAV $n$ at time $t$.
We define this UAV network's lifetime as the time until the first UAV depletes its energy.
\begin{figure*}[!tbp]
\centerline{\includegraphics[width=.9\textwidth]{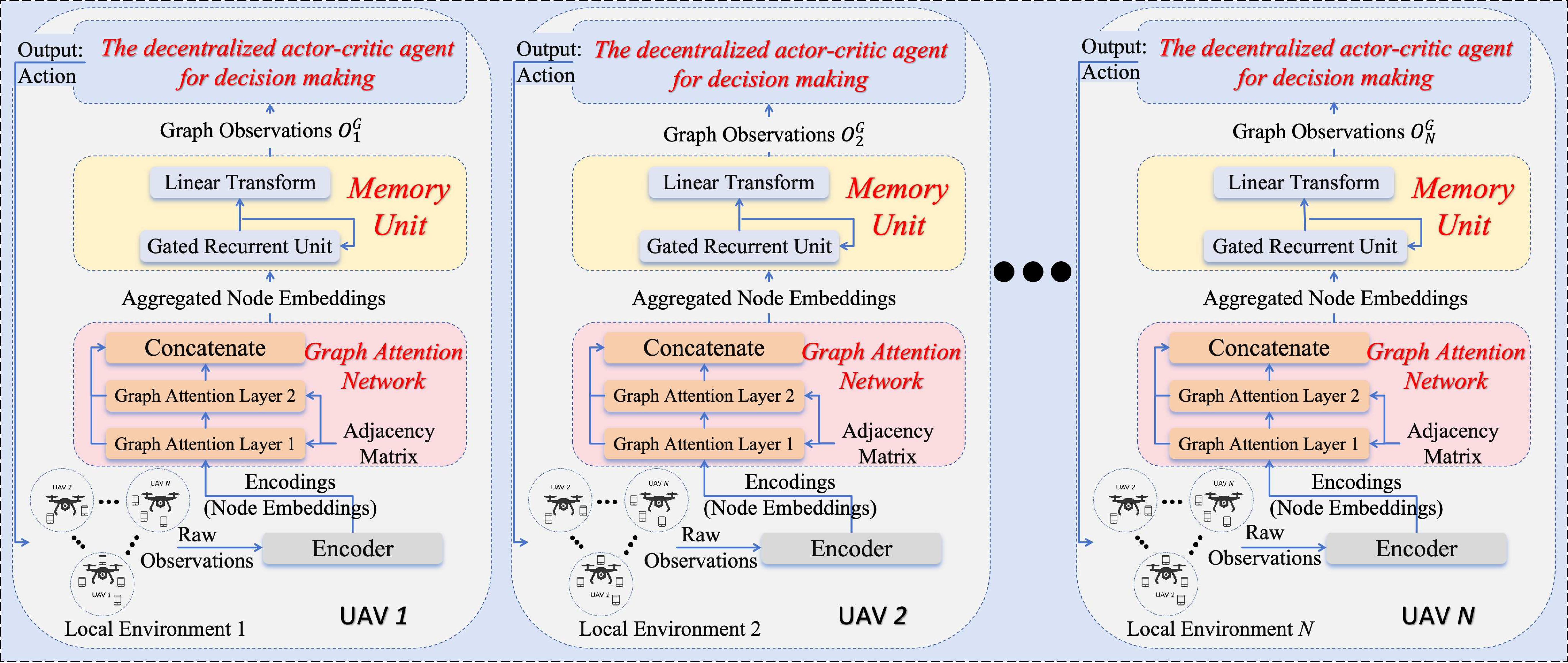}}
\caption{
The architecture of the proposed GADC-based RL.
}
\label{Architecture}
\end{figure*}

\begin{figure*}[!tbp]
\centerline{\includegraphics[width=.9\textwidth]{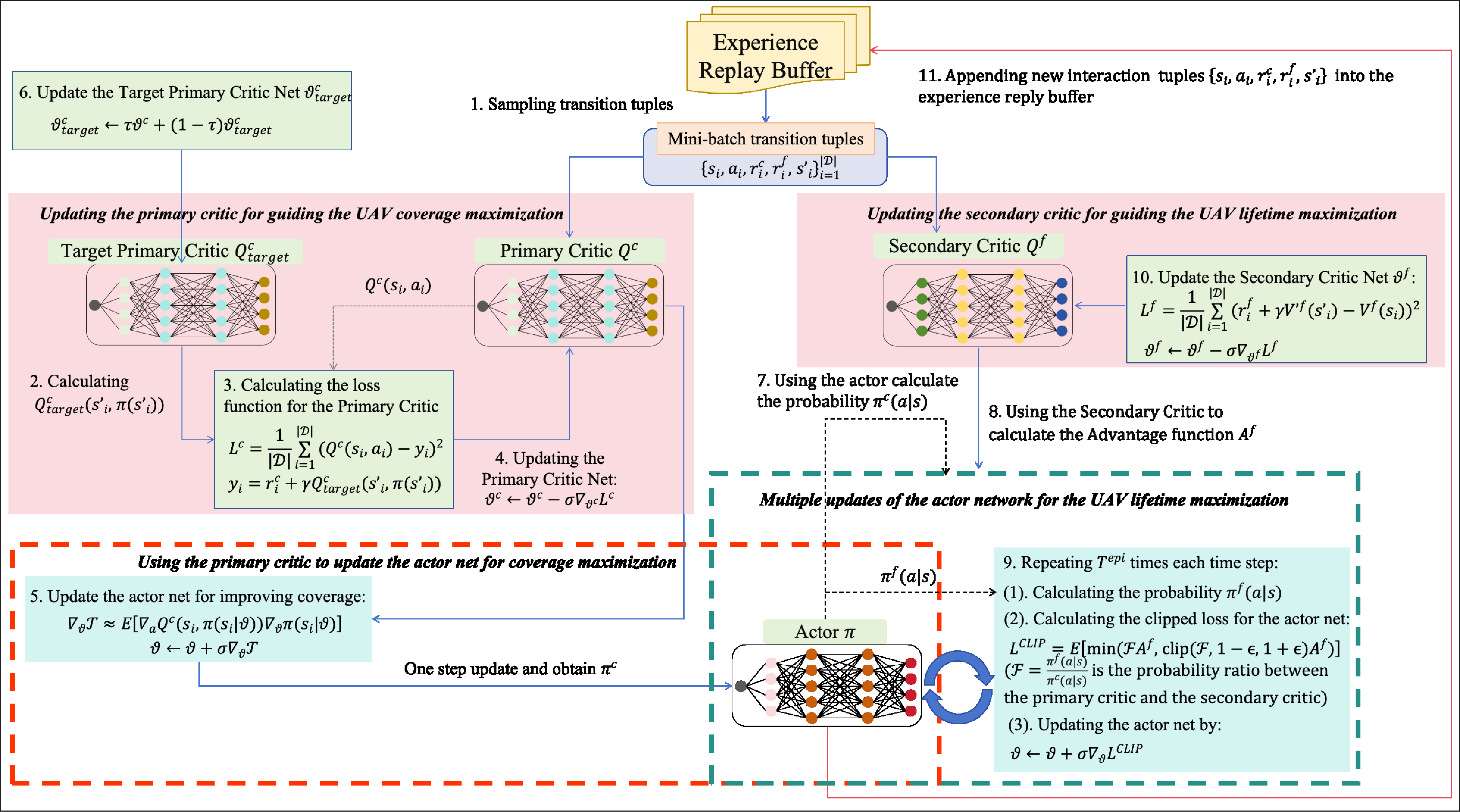}}
\caption{
The training processing of the proposed GADC-based decentralized actor-critic agent.
}
\label{RL_architecture}
\vspace{-10pt}
\end{figure*}

\begin{figure}[!tbp]
\centerline{\includegraphics[width=.475\textwidth]{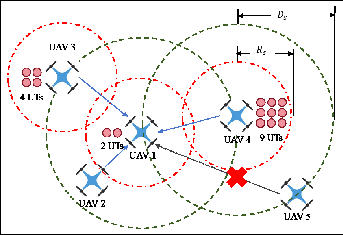}}
\caption{
The illustration of the message passing between UAVs.
}
\label{GATExample}
\vspace{-10pt}
\end{figure}

\section{GADC-Based Reinforcement Learning}
\label{Sec:DualAC}

\subsection{Markov Decision Process for System Operation}
\label{Sec:problem}
Recall that we aim to enhance the UAV system's service coverage while maximizing the UAV network's lifetime.
At the beginning of each time slot, UAVs make movement decisions based on their environmental observations. 
These actions influence their service range, observation range, and UAV-UAV connectivity, subsequently affecting service coverage and energy consumption. 
This decision-making process forms an MDP where current states and actions determine future outcomes.
In this MDP,  each UAV's state observation includes its current position, residual energy, the positions of the UTs within its service and observation ranges, and the locations of neighboring UAVs.
The action space $\Gamma$ for each UAV consists of 17 possible actions:  movement in eight uniformly distributed directions, each with two distance options, plus one hovering (stationary) action.
The action taken in time slot $t$ influences other UAVs' observations (e.g., the set of neighboring UAVs, overlap of service regions, etc.) at the beginning of slot $t+1$.
At each time step $t$, the system generates two instantaneous global rewards: an coverage-effectiveness reward $r^{c}(t)=\sum_{n=1}^{N}c_{n}(t)$ measuring service coverage and a lifetime-aware reward $r^{f}(t)=\min\{b_n(t)\}_{n=1}^{N}$ based on the residual battery energy to account for the lifetime of the UAV system.
The corresponding long-term objectives for each UAV is 
\begin{equation}
    \begin{aligned}
        \mathcal{J} = \mathrm{E}\left[\sum_{t=0}^{T}\gamma^{t}r(t)\right],
    \end{aligned}
\end{equation}
where $T$ represents the time horizon of UAV system operation.
$\gamma \in [0,1)$ is the discount factor weighing future rewards against immediate ones.
The reward at time step \( t \), denoted as \( r(t) \), is influenced by two factors: the coverage-effectiveness reward \( r^{c}(t) \) and the lifetime-aware reward \( r^{f}(t) \).
We iteratively refine the UAV's policy by incorporating the coverage-effectiveness and lifetime-aware objectives. 
In each iteration, we first maximize the UAVs' coverage through the coverage-effectiveness objective. 
Subsequently, we optimize the UAVs' operational lifetime within a constrained region surrounding the policy derived from the coverage-effectiveness objective.

\subsection{System Architecture}
Ideally, a global optimization of all UAVs' actions would optimize the coverage-effectiveness and lifetime-award rewards. 
In practice, however, each UAV makes decentralized decisions based on its local observation. 
To bridge this gap between global optimality and local decision making, we propose a GADC-based RL framework. 
First, GADC employs a GAT that enables information exchange between neighboring UAVs, facilitating local-to-global knowledge propagation. 
Second, each UAV employs an actor-double-critic network to make autonomous decisions based on the information collected via the GAT, steering local decisions toward global optimization. 
As shown in Fig.~\ref{Architecture}, the proposed architecture processes raw UAV observations through a pipeline of four main stages. 
First, local observations of each UAV pass through an encoder to generate node embeddings. 
These embeddings then flow through a deep graph attention network comprising multiple GAT layers and a concentration layer, incorporating adjacency matrix information. 
The aggregated node embeddings then enter a memory unit, where a Gated Recurrent Unit (GRU) followed by a linear transformation provides temporal context. 
Finally, these processed graph observations $O^G_n$ serve as the input to an Actor-Critic network, where an Actor network determines actions while dual Critic networks (Coverage-effectiveness and Lifetime-aware) leverage an Experience Replay Buffer for action evaluation. 
Fig.~\ref{RL_architecture} illustrates the training process of the Actor-Critic network, which will be discussed in the following sections.

\begin{algorithm*}[!tbp]
\vspace{-5pt}
\caption{The training process of the proposed GADC}
\label{algorithm:DGAC}
\textbf{Input}: \\
\quad The number of UAVs, $N$, and the number of UTs, $M$;\\
\quad The number of time slot $T$ in each episode; \\
\textbf{Initial}: The actor network $\pi$, the primary critic network $Q^{c}$, the target primary critic network $Q_{target}^{c}$, and the secondary critic network $Q^{f}$ for each agent $n$; \\
\textbf{Output}: The optimal actor network for each agent $n$.\\
\For {each episode}{
Randomly generate the position of UTs and UAVs in the given $E\times E$ area;\\ 
    \For{time step from $t=0$ to $T$}{
        $t = t + 1$;\\
        \For{\text{agent} $n=1$ to $N$}{
        Obtain the raw observation $s_{n}(t)$, including its current position, residual energy, the positions of the UTs within its service and observation ranges, and the locations of neighboring UAVs;\\
        Execute encoder function $\mu_{n} = ENC(s_{n})$ to generate the node embedding $\mu_{n}$;\\
        Use Eq.~\eqref{attentionfun} to aggregate the node embeddings and extract the importance of neighboring UAVs;\\
        Concatenate the node embeddings $\mu_{n}$ and the output of GAT layers, $g_{n}$, as the input of the memory unit, GRU;\\
Obtain the state \( O_{n}^{G}(t) \) by concatenating the aggregated GAT embedding \( g_{n}(t) \) from the current time step with the hidden state \( h_{n}(t-1) \) of the GRU from the previous time step;\\
        Select a action $a_{n}(t)$ from the actor net using the input \( O_{n}^{G}(t) \);\\
        Execute the action $a_{n}(t)$ then obtain the coverage-effectiveness reward $r_{n}^{c}(t)$, the lifetime-award reward $r_{n}^{f}(t)$, and the next raw state $s_{n}(t+1)$;\\
        Store the transition tuples $\{s_{n}(t),a_{n}(t),r_{n}^{c}(t),r_{n}^{f}(t),s_{n}(t+1)\}$ into the replay buffer;\\
        \If{Training==True}{
            Sample transition tuples $\mathcal{D}$ from the replay buffer;\\
            Update the primary critic by minimizing Eq.~\eqref{TDloss};\\
            Optimize the coverage-effectiveness objective for the actor net by Eq.~\eqref{optimizeEffective};\\
            Use Eq.~\eqref{softupdates} to softly update the target primary critic network;\\
            Use the sampled tuples and the updated actor to calculate the probability $\pi^{c}(a\mid s)$;\\
            \For{inner loop from $0$ to $T^{epi}$}{
            Calculate the probability $\pi^{f}(a\mid s)$ from the actor net;\\
            Use Eq.~\eqref{lifetimeOptimize} to calculate the clipped loss for the actor net;\\
            Update the actor net by minimizing \eqref{lifetimeOptimize};
            }
            Update the lifetime-aware critic by minimizing Eq.~\eqref{loss_lifetime};
        }
        }
    }
}
\end{algorithm*}

\subsection{Graph attention network for observation exploration}
In this subsection, we detail the operation of the graph attention networks and memory units in Fig.~\ref{Architecture}. 
Consider an undirected graph $G(\mathcal{V}, L)$, where $\mathcal{V}$ is the set of UAVs (nodes), $L$ is the set of links (edges) connecting two neighboring UAVs, and $\mathcal{A}$ is the adjacency matrix of $G$.
Each UAV $n$'s local raw observation $s_{n}$ includes its current position, residual energy, the positions of the UTs within its service and observation ranges, and the locations of neighboring UAVs.
The local observation $s_{n}$ is transformed to an embedding $\mu_{n} = ENC(s_{n})$ through a common encoder function $ENC$.

In graph $G$, a UAV $n$'s attention to neighboring UAVs is weighted by their service loads, position, and residual energy. 
For example, as illustrated in Fig.~\ref{GATExample}, UAV 1 assigns a higher attention weight to UAV 4 (serving 9 UTs) compared to UAV 3 (serving 4 UTs). 
The GAT layer dynamically computes these attention weights, allowing each UAV to prioritize information from its neighbors based on their relative importance determined by service load, spatial relationship, and energy status.
We formulate the self-attention in the GAT layers as
\begin{equation}
    \begin{aligned}
        \alpha_{n,i} = \frac{\text{exp}\left(\text{ReLU}\left(\left(W_{K}\mu_{i}\right)^{T}\cdot W_{Q}\mu_{n}\right)\right)}{\sum_{l\in \mathcal{V}}\text{exp}\left(\text{ReLU}\left(\left(W_{K}\mu_{l}\right)^{T}\cdot W_{Q}\mu_{n}\right)\right)},
    \end{aligned}
    \label{attentionfun}
\end{equation}
where $\alpha_{n,i}$ is the attention weight that measures neighbor $i$'s importance to node n, computed using the self-attention mechanism's learnable matrices $W_{Q}$, $W_{K}$, and $W_{V}$ for Query, Key, and Value operations, respectively.
To address training instability from random initialization of matrices $W_{Q}$, $W_{K}$, and $W_{V}$, the architecture employs multihead attention with $J$ independent heads $\left\{\text{head}_{1},\dots,\text{head}_{J}\right\}$.
Each head $j$ has its own learnable matrices $W_{Q}^{j}$, $W_{K}^{j}$, and $W_{V}^{j}$.
Each node $n$ aggregates the embedding of neighboring nodes as 
\begin{equation}
\begin{aligned}
g_{n} = \text{Concat}_{j=1}^{J}(\sum\limits_{i\in \mathcal{I}_n}\alpha_{n,i}^{j}\cdot W_{V}^{j}\mu_{i}).
\end{aligned}
\end{equation}

While stacking more GAT layers would aggregate information from more neighboring nodes and enhance embedding information, it also increases computational overhead. 
To strike a balance, we employ a two-hop GAT architecture, allowing each node to gather embedding information from its first-hop and second-hop neighbors. 
As shown in Fig.~\ref{Architecture}, the aggregated embeddings from the encoder and GAT layers are concatenated to produce the final output, which is then fed into the memory unit.

In the memory unit, a GRU is implemented to store aggregated node embeddings from the last time slot to maintain temporal information.
The GRU memory unit processes long-term memory according to
\begin{equation}
    \begin{aligned}
        h_{n}(t) = \text{GRU}\left(g_{n}(t)\mid h_{n}(t-1)\right),
    \end{aligned}
\end{equation}
where $g_{n}(t)$ represents node $n$'s aggregated GAT embedding at time slot $t$, and $h_{n}(t-1)$ is node $n$'s hidden state from the previous time slot $t-1$.
As shown in Fig.~\ref{Architecture}, the memory unit concludes with a linear transform that processes the hidden state $h_{n}(t)\mid_{t=T}$.
Each UAV $n$'s final output denoted as $O^{G}_{n}(t)$, is the input to its actor-double-critic network.

\subsection{The Actor-Critic networks}
This subsection details the Actor-Critic networks module in Fig.~\ref{RL_architecture}. 
In the Actor-Critic networks, each agent’s environment state is $s(t) = O^G_n(t)$, the output of the memory unit. 
The agent's actions influence the raw observations, which subsequently feed into the GAT.
We define $\pi^{c}$ and $\pi^{f}$ as the coverage-effectiveness and lifetime-aware operations, respectively, for actor updates. 
The actor network's policy $\pi$ is a combination of these two operations.
The $\pi^{c}$ represents the actor's policy after the coverage-effectiveness critic updates the actor network, while $\pi^{f}$ denotes the policy after the lifetime-aware critic updates the actor network.
As illustrated in Algorithm~\ref{algorithm:DGAC} and Fig.~\ref{RL_architecture}, the coverage-effectiveness critic and the lifetime-aware critic alternately optimize the actor network, managing the trade-off between the dual objectives.
The notation $\mathcal{J}^{c}_{\pi^{f}}$ denotes the coverage-effectiveness objective when the actor network is trained by a combination of the coverage-effectiveness critic and lifetime-aware critic, while $\mathcal{J}^{c}_{\pi^{c}}$ denotes the pure coverage-effectiveness objective when the actor network is trained based on the coverage-effectiveness critic only.
The discounted total return of $\mathcal{J}^{c}_{\pi^{f}}$ and $\mathcal{J}^{c}_{\pi^{c}}$ is defined by
\begin{equation}
    \begin{aligned}
        \mathcal{J}^{c}_{\pi^{f}} &=\mathop{\mathrm{E}}_{{s\backsim d^{\pi^{f}}}
    \atop
    {a\backsim \pi^{f}}}\left[\sum_{t=0}^{\infty}{\gamma^{t}r^{c}(s\mid a)}\right]
    \end{aligned}
    \label{discountedReturn}
\end{equation}
and
\begin{equation}
    \begin{aligned}
        \mathcal{J}^{c}_{\pi^{c}} &=\mathop{\mathrm{E}}_{{s\backsim d^{\pi^{c}}}
    \atop
    {a\backsim \pi^{c}}}\left[\sum_{t=0}^{\infty}{\gamma^{t}r^{c}(s\mid a)}\right],
    \end{aligned}
    \label{discountedReturn2}
\end{equation}
respectively.
In policy gradient methods, the advantage function is used to measure how much better (or worse) taking action $a$ in state $s$ is compared to the average action under policy $\pi$.
Define the advantage function as
\begin{equation}
    \begin{aligned}
        A^{\ast}(s,a) = Q^{\ast}(s,a)-V^{\ast}(s),
    \end{aligned}
\end{equation}
where $Q^{\ast}(s, a)$ denotes the action-value function and $V^{\ast}(s)$ represents the state-value function.
The superscript ${\ast}$ can be substituted with $c$ and $f$ to indicate the coverage-effectiveness and lifetime-aware policies, respectively.

We implement a DDPG architecture where the coverage-effectiveness critic serves as the primary critic for maximizing the covered UTs among the multi-UAV system. 
The primary critic updates via temporal-difference (TD) error, using stochastic gradient descent to minimize the mean-squared error (MSE) between its predictions and the target values from the lifetime-aware critic.
In particular, the target of the TD error for any UAV $n$
\begin{equation}
    \begin{aligned}
        y = r^{c} + \gamma Q^{c}_{target}\left(s',\pi(s')\right),
    \end{aligned}
\end{equation}
where \( \pi \) represents the combined policy of the UAV, and \( Q^{c}_{target} \) is the target critic network used to update the primary critic network, as is standard in DDPG architecture.
For notation simplicity, we use $s$ and $s'$ to represent the environment states of the current and next time slots, respectively.
The MSE for updating the primary critic is defined as
\begin{equation}
    \begin{aligned}
        &{L}^{c}=\frac{1}{|\mathcal{D}|}\sum_{i=1}^{|\mathcal{D}|}\left(Q^{c}\left(s_{i},a_{i}\right)-y_{i}\right)^{2},\\
        &\vartheta^{c}\leftarrow \vartheta^{c} - \sigma \nabla_{\vartheta^{c}}{L}^{c},
    \end{aligned}
    \label{TDloss}
\end{equation}
where $\{(s_i,a_i,r^{c}_i,s'_i)\}_{i=1}^{|\mathcal{D}|}$ is the random mini-batch sampled from the replay buffer, and $\sigma$ is the learning rate.

Subsequently, we employ the updated primary critic network of Eq.~\eqref{TDloss} to enhance the coverage-effectiveness policy of the actor network. 
To improve the coverage performance, the actor network is refined by maximizing the expected return derived from the output of the primary critic network, denoted as $Q^{c}$. 
The loss function for the actor network is defined as:
\begin{equation}
    \begin{aligned}
        L(\vartheta) = -\frac{1}{|\mathcal{D}|}\sum_{i=1}^{|\mathcal{D}|}\left[Q^{c}\left(s_i,\pi \left(s_i\mid \vartheta\right)\right)\right],
    \end{aligned}
\end{equation}
where $\vartheta$ is the parameters of the actor network, $Q^{c}\left(s_i,\pi \left(s_i\mid \vartheta\right)\right)$ denotes the estimated value of taking the action produced by the actor network $\pi \left(s_i\mid \vartheta\right)$ and the state $s_i$, evaluated by the primary critic network $Q^{c}$.
We then update the actor network's parameters using the following use the chain rule:
\begin{equation}
    \begin{aligned} 
        &\nabla _{\vartheta}\mathcal{J}^{c}\approx \mathrm{E}_{s}\Big [\nabla _{a}Q^{c}\left(s,a\right)\mid_{a\sim\pi (s)}\nabla _{\vartheta}\pi(s)\Big],\\
        &\vartheta \leftarrow \vartheta + \sigma\nabla_{\vartheta}\mathcal{J}^{c}.
    \end{aligned}
    \label{optimizeEffective}
\end{equation}
The target primary critic network is updated softly using the following method:
\begin{equation}
    \begin{aligned}
\vartheta^{c}_{target}\leftarrow \tau \vartheta^{c} + (1-\tau)\vartheta^{c}_{target}
    \end{aligned}
    \label{softupdates}
\end{equation}
where $\vartheta^{c}_{target}$ and $\vartheta^{c}$ are the parameters in the target primary critic network and the primary critic network, respectively.
$\tau \in \left(0,1\right)$ is a small positive constant, controlling the update rate.

Likewise, the secondary (lifetime-aware) critic network is trained to maximize the minimum residual energy across all UAVs.
However, since this lifetime-aware objective may compromise coverage-effectiveness optimization, each agent's lifetime-aware optimization needs to be constrained within an acceptable effectiveness loss threshold.
To address this, we propose a proximal policy optimization (PPO)-based algorithm to update the secondary critic network while maintaining bounded effectiveness loss.
Denote the \textit{probability ratio} between the lifetime-aware policy, $\pi^{f}$, and the coverage-effectiveness policy, $\pi^{c}$, by $\mathcal{F} = \frac{\pi^{f}(a\mid s)}{\pi^{c}(a\mid s)}$. 
Typically, the primary critic updates the actor once per time step, while the secondary critic updates the actor \( T^{epi} \) times during each time step. 
The notation \( \pi^{c}(a \mid s) \) represents the probability of the actor selecting action \( a \) given state \( s \) after the actor policy \( \pi^{c} \) is updated by the primary critic.
In contrast, \( \pi^{f}(a \mid s) \) denotes the probability derived after the actor policy is updated by the secondary critic during each iteration of the \( T^{epi} \)-times for-loop optimization.
The actor network is updated by minimizing the clipped objective function, which can be given by
\begin{equation}
    \begin{aligned}
        &{L}^{\text{CLIP}}(\vartheta)=\mathrm{E}\left[\min(\mathcal{F}A^{f}, clip(\mathcal{F}, 1-\epsilon, 1+\epsilon)A^{f})\right],\\
        &\vartheta \leftarrow \vartheta + \sigma \nabla_{\vartheta}{L}^{\text{CLIP}}(\vartheta),
    \end{aligned}
    \label{lifetimeOptimize}
\end{equation}
where $\epsilon \in [0.1, 0.4]$ is a hyperparameter, and $A^{f}$ is the estimated advantage of the lifetime-aware objective.
The secondary critic's advantage function $A^{f}$ for each agent $n$ can be denoted by
\begin{equation}
    \begin{aligned}
        A^{f} = r^{f} + \gamma V'^{f}(s') -V^{f}(s),
    \end{aligned}
\end{equation}
where $V'^{f}(s')$ is the state-value function of the lifetime-aware policy of the agent $n$ for the next time step $t+1$. 
Furthermore, the secondary critic is updated by minimizing the following loss function
\begin{equation}
    \begin{aligned}
&{L}^{f} = \frac{1}{|\mathcal{D}|}\sum_{i=1}^{|\mathcal{D}|}\left(r^{f} + \gamma V'^{f}(s') -V^{f}(s)\right)^2,\\
&\vartheta^{f} \leftarrow \vartheta^{f} - \sigma \nabla_{\vartheta^{f}}L^{f},
    \end{aligned}
    \label{loss_lifetime}
\end{equation}
where $\vartheta^{f}$ is the parameter of the secondary critic network.

Before proving the performance bound of the proposed GADC, we briefly introduce the foundational lemma of the Performance Difference Lemma. 
This lemma states that the difference in expected performance between two policies can be expressed in terms of the advantage function and the action probabilities of the policies~\cite{10.5555/645531.656005}.
In policy gradient methods, the lemma allows for deriving the policy gradient update rule. The expected improvement can be calculated using the advantage estimates, guiding the optimization of the policy.
Formally, if you have two policies \(\pi\) and \(\pi'\), the lemma can be stated as:
\begin{equation}
    \begin{aligned}
        V^{\pi'}(s) - V^{\pi}(s) = \mathbb{E}_{a \sim \pi'} \left[ A^{\pi}(s, a) \right]
    \end{aligned}
\end{equation}
where \(V^{\pi}(s)\) is the expected return when following policy \(\pi\) from state \(s\), \(A^{\pi}(s, a)\) is the advantage function, which measures the action \(a\) in state \(s\) is compared to the average action under policy \(\pi\).

In each time step, \( \mathcal{J}^{c}_{\pi^{f}} \) represents the objective of the actor's policy guided by both the coverage-effectiveness critic and the lifetime-aware critic. In contrast, \( \mathcal{J}^{c}_{\pi^{c}} \) denotes the objective of the actor's policy when guided solely by the coverage-effectiveness critic, prior to any updates from the lifetime-aware critic.
We present the performance difference bound lemma between \( \mathcal{J}^{c}_{\pi^{c}} \) and \( \mathcal{J}^{c}_{\pi^{f}} \) as follows:

\begin{lemma}
Given two policies $\pi^{f}$ and $\pi^{c}$, the performance difference between $\mathcal{J}^{c}_{\pi^{c}}$ and $\mathcal{J}^{c}_{\pi^{f}}$ is bounded by 
\begin{equation}
    \begin{aligned}
        &\mathcal{J}^{c}_{\pi^{f}} - \mathcal{J}^{c}_{\pi^{c}} \leq \frac{1}{1-\gamma}\mathop{\mathbb{E}}_{{s\backsim d^{\pi^{c}}}
    \atop
    {a\backsim \pi^{f}}}
\left [A^{c}\left(s,a\right)\right ]+\frac{\sqrt{2\delta}\gamma \epsilon^{\pi^{f}}}{\left(1-\gamma\right)^{2}},\\
    \end{aligned}
    \label{lemma_bound1}
\end{equation}
and
\begin{equation}
    \begin{aligned}
\mathcal{J}^{c}_{\pi^{f}} - \mathcal{J}^{c}_{\pi^{c}}\geq 
\frac{1}{1-\gamma}\mathop{\mathbb{E}}_{{s\backsim d^{\pi^{c}}}
    \atop
    {a\backsim \pi^{f}}}
\left [A^{c}\left(s,a\right)\right ]-\frac{\sqrt{2\delta}\gamma \epsilon^{\pi^{f}}}{\left(1-\gamma\right)^{2}},
    \end{aligned}
    \label{lemma_bound2}
\end{equation}
where $\delta$ denotes the specified constraint region for the Kullback-Leibler (KL) divergence between the two policies.
The maximum expected value of the advantage function across all states is $\epsilon^{\pi^{f}}=\mathop{\max}_{s} \left|\mathop{\mathbb{E}}_{
{a\backsim \pi^{f}}}
\left [A^{c}\left(s,a\right)\right ]\right|$.
\end{lemma}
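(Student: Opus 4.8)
The plan is to adapt the classical trust-region argument of Schulman et al.\ to the pair of policies $\pi^{c}$ and $\pi^{f}$ produced within a single training iteration. First I would invoke the Performance Difference Lemma in its discounted-occupancy form, which upgrades the state-wise identity quoted above to
\[
\mathcal{J}^{c}_{\pi^{f}} - \mathcal{J}^{c}_{\pi^{c}} = \frac{1}{1-\gamma}\mathop{\mathbb{E}}_{s\sim d^{\pi^{f}},\,a\sim\pi^{f}}\left[A^{c}(s,a)\right],
\]
where $d^{\pi^{f}}$ is the normalized discounted state-visitation distribution induced by $\pi^{f}$. The right-hand side already resembles the surrogate term in \eqref{lemma_bound1}--\eqref{lemma_bound2}; the only discrepancy is that the state average is taken under $d^{\pi^{f}}$ rather than under the old-policy occupancy $d^{\pi^{c}}$. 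Writing $\bar A(s):=\mathop{\mathbb{E}}_{a\sim\pi^{f}}[A^{c}(s,a)]$ and
\[
\Delta := \frac{1}{1-\gamma}\left(\mathop{\mathbb{E}}_{s\sim d^{\pi^{f}}}\bar A(s) - \mathop{\mathbb{E}}_{s\sim d^{\pi^{c}}}\bar A(s)\right),
\]
the whole proof then collapses to showing $|\Delta|\le \frac{\sqrt{2\delta}\,\gamma\,\epsilon^{\pi^{f}}}{(1-\gamma)^{2}}$: since $\mathcal{J}^{c}_{\pi^{f}} - \mathcal{J}^{c}_{\pi^{c}}$ equals the surrogate term plus $\Delta$, bounding $\Delta$ from above and from below immediately yields \eqref{lemma_bound1} and \eqref{lemma_bound2} respectively.

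To bound $\Delta$ I would chain three estimates. (i) A H\"older-type bound: because $|\bar A(s)|\le \epsilon^{\pi^{f}}$ by the definition of $\epsilon^{\pi^{f}}$, we get $|\Delta|\le \frac{\epsilon^{\pi^{f}}}{1-\gamma}\lVert d^{\pi^{f}}-d^{\pi^{c}}\rVert_{1} = \frac{2\epsilon^{\pi^{f}}}{1-\gamma} D_{\mathrm{TV}}(d^{\pi^{f}},d^{\pi^{c}})$. (ii) An occupancy-shift lemma: $D_{\mathrm{TV}}(d^{\pi^{f}},d^{\pi^{c}}) \le \frac{\gamma}{1-\gamma}\max_{s} D_{\mathrm{TV}}(\pi^{f}(\cdot\mid s),\pi^{c}(\cdot\mid s))$, obtained by expressing $d^{\pi}$ as the $\gamma$-geometric mixture of the $t$-step state marginals, telescoping the difference of the two marginals through a one-step coupling of the transition kernel (using that a stochastic kernel is an $L_1$-contraction), and summing the resulting arithmetic-geometric series $\sum_t t\gamma^t$. (iii) Pinsker's inequality: $D_{\mathrm{TV}}(\pi^{f}(\cdot\mid s),\pi^{c}(\cdot\mid s))\le \sqrt{\tfrac12 D_{\mathrm{KL}}(\pi^{f}(\cdot\mid s)\,\Vert\,\pi^{c}(\cdot\mid s))}\le \sqrt{\delta/2}$, valid for every $s$ because the clipped PPO update in \eqref{lifetimeOptimize} confines $\pi^{f}$ to the $\delta$-KL trust region around $\pi^{c}$. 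Multiplying the three bounds gives $|\Delta|\le \frac{2\epsilon^{\pi^{f}}}{1-\gamma}\cdot\frac{\gamma}{1-\gamma}\cdot\sqrt{\delta/2}=\frac{\sqrt{2\delta}\,\gamma\,\epsilon^{\pi^{f}}}{(1-\gamma)^{2}}$, which is exactly the slack term in the statement.

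I expect step (ii), the occupancy-measure shift bound, to be the main obstacle: it is the place where a single-step total-variation discrepancy between $\pi^{f}$ and $\pi^{c}$ must be propagated through the environment dynamics over an infinite horizon, and the argument has to be carried out so that precisely one extra factor of $1/(1-\gamma)$ (together with one factor of $\gamma$) is accumulated, no more and no less. A secondary bookkeeping point is to fix the normalization convention for $d^{\pi}$ at the outset --- whether or not it carries the leading $(1-\gamma)$ --- since the powers of $(1-\gamma)$ in \eqref{lemma_bound1}--\eqref{lemma_bound2} are sensitive to that choice; I would adopt the normalized convention so that $\mathop{\mathbb{E}}_{s\sim d^{\pi}}[\,\cdot\,]$ is a genuine expectation and the constants line up with the claimed bounds. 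Once (i)--(iii) are in place, assembling the upper and lower bounds is immediate.
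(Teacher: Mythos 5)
Your proposal follows essentially the same route as the paper's proof: the Performance Difference Lemma gives the exact identity under $d^{\pi^{f}}$, the change of state distribution is controlled by a H\"older/total-variation bound with $\epsilon^{\pi^{f}}$, the occupancy-measure TV distance is bounded by $\frac{\gamma}{1-\gamma}$ times the per-state policy TV distance, and Pinsker plus the $\delta$-KL trust region yields the $\frac{\sqrt{2\delta}\,\gamma\,\epsilon^{\pi^{f}}}{(1-\gamma)^{2}}$ slack. The only cosmetic difference is that you take a worst-case (max over $s$) policy-TV/KL bound where the paper works with the expectation over $s\sim d^{\pi^{c}}$ (closing with Jensen), which does not change the argument or the constants.
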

\begin{proof}
First note that $A^{c}(s, a) = {r}^{c}(s, a)+\gamma V^{{\pi}^{c}}(s')-V^{{\pi}^{c}}(s)$.
We start with the Performance Difference Lemma~\cite{10.5555/645531.656005}, which states that the discounted total effectiveness return for the combination of $\pi^{f}$ and $\pi^{c}$ is:
\begin{equation}
\begin{aligned}
&\mathcal{J}^{c}_{\pi^{f}}
=\mathop{\mathbb{E}}_{{s\backsim d^{\pi^{f}}}
    \atop
    {a\backsim \pi^{f}}}
\left [\sum_{t=0}^{\infty}\gamma^{t}{r}^{c}(s\mid a)\right ] \\
&=\mathop{\mathbb{E}}_{{s\backsim d^{\pi^{f}}}
    \atop
    {a\backsim \pi^{f}}}
\left [\sum_{t=0}^{\infty}\gamma^{t}{r}^{c}(s, a)-\sum_{t=0}^{\infty}\gamma^{t}V^{{\pi}^{c}}(s)+\sum_{t=0}^{\infty}\gamma^{t}V^{{\pi}^{c}}(s)\right ] \\
&=\mathop{\mathbb{E}}_{{s\backsim d^{\pi^{f}}}
    \atop
    {a\backsim \pi^{f}}}
    \left[\sum_{t=0}^{\infty}\gamma^{t}{r}^{c}(s, a)-\sum_{t=0}^{\infty}\gamma^{t}V^{{\pi}^{c}}(s)+\sum_{t=0}^{\infty}\gamma^{t+1}V^{{\pi}^{c}}(s')\right]\\
    &\quad\quad+\mathop{\mathbb{E}}_{s_0\backsim d^{\pi^{c}}}{\left[V^{{\pi}^{c}}(s_{0})\right]}\\
&=\mathop{\mathbb{E}}_{{s\backsim d^{\pi^{f}}}
    \atop
    {a\backsim \pi^{f}}}
    \left[\sum_{t=0}^{\infty}\gamma^{t}\left({r}^{c}(s, a)+\gamma V^{{\pi}^{c}}(s')-V^{{\pi}^{c}}(s)\right)\right]\\
    &\quad\quad+\mathop{\mathbb{E}}_{s_0\backsim d^{\pi^{c}}}{\left[V^{{\pi}^{c}}(s_{0})\right]}\\
&=\mathop{\mathbb{E}}_{{s\backsim d^{\pi^{f}}}
    \atop
    {a\backsim \pi^{f}}}
    \left[\sum_{t=0}^{\infty}\gamma^{t}\left(Q^{{\pi}^{c}}(s, a)-V^{{\pi}^{c}}(s)\right)\right]+\mathop{\mathbb{E}}_{s_0\backsim d^{\pi^{c}}}{\left[V^{{\pi}^{c}}(s_{0})\right]}\\
&=\mathop{\mathbb{E}}_{{s\backsim d^{\pi^{f}}}
    \atop
    {a\backsim \pi^{f}}}
    \left[\sum_{t=0}^{\infty}\gamma^{t}A^{c}\left(s,a\right)\right]+\mathcal{J}^{c}_{\pi^{c}}\\
&=\frac{1}{1-\gamma}\mathop{\mathbb{E}}_{{s\backsim d^{\pi^{f}}}
    \atop
    {a\backsim \pi^{f}}}
    \left[A^{c}\left(s,a\right)\right]+\mathcal{J}^{c}_{\pi^{c}}.
\end{aligned}
\label{lemma_eq7}
\end{equation}
So we have
\begin{equation}
\begin{aligned}
\mathcal{J}^{c}_{\pi^{f}} - \mathcal{J}^{c}_{\pi^{c}}
=\frac{1}{1-\gamma }
\mathop{\mathbb{E}}_{{s\backsim d^{\pi^{f}}}
    \atop
    {a\backsim \pi^{f}}}
\left [A^{c}\left(s,a\right)\right ].
\end{aligned}
\label{lemma_eq8}
\end{equation}
Therefore,  
\begin{equation}
\begin{aligned}
\mathcal{J}^{c}_{\pi^{f}} - \mathcal{J}^{c}_{\pi^{c}}
=\frac{1}{1-\gamma }
\left(\mathop{\mathbb{E}}_{{s\backsim d^{\pi^{f}}}
    \atop
    {a\backsim \pi^{f}}}
\left [A^{c}\left(s,a\right)\right ]-\mathop{\mathbb{E}}_{{s\backsim d^{\pi^{c}}}
    \atop
    {a\backsim \pi^{c}}}
\left [A^{c}\left(s,a\right)\right ]\right).
\end{aligned}
\label{lemma_eq10}
\end{equation}
The term $\mathop{\mathbb{E}}_{{s\backsim d^{\pi^{f}}}
    \atop
    {a\backsim \pi^{f}}}
\left [A^{c}\left(s,a\right)\right ]$ represents the inner product of $d^{\pi^{c}}$ and $\mathop{\mathbb{E}}_{
    {a\backsim \pi^{f}}}
\left [A^{c}\left(s,a\right)\right ]$:
\begin{equation}
    \begin{aligned}
       & \mathop{\mathbb{E}}_{{s\backsim d^{\pi^{f}}}
    \atop
    {a\backsim \pi^{f}}}
\left [A^{c}\left(s,a\right)\right ]=\left<d^{\pi^{f}}, \mathop{\mathbb{E}}_{
    {a\backsim \pi^{f}}}
\left [A^{c}\left(s,a\right)\right ]\right> \\
&=\left<d^{\pi^{c}}, \mathop{\mathbb{E}}_{
    {a\backsim \pi^{f}}}
\left [A^{c}\left(s,a\right)\right ]\right> + \left<d^{\pi^{f}}-d^{\pi^{c}}, \mathop{\mathbb{E}}_{
    {a\backsim \pi^{f}}}
\left [A^{c}\left(s,a\right)\right ]\right>\\
&=\mathop{\mathbb{E}}_{{s\backsim d^{\pi^{c}}}
    \atop
    {a\backsim \pi^{f}}}
\left [A^{c}\left(s,a\right)\right ]+ \left<d^{\pi^{f}}-d^{\pi^{c}}, \mathop{\mathbb{E}}_{
    {a\backsim \pi^{f}}}
\left [A^{c}\left(s,a\right)\right ]\right>.
    \end{aligned}
    \label{lemma_eq11}
\end{equation}
By applying Holder's inequality, the last term in Eq.~\eqref{lemma_eq11} is bounded by
\begin{equation}
    \begin{aligned}
        &\left|\left<d^{\pi^{f}}-d^{\pi^{c}}, \mathop{\mathbb{E}}_{
    {a\backsim \pi^{f}}}
\left [A^{c}\left(s,a\right)\right ]\right>\right| \\
&\quad \leq 
\left \|d^{\pi^{f}}-d^{\pi^{c}} \right \|_{p}
\left \|\mathop{\mathbb{E}}_{
    {a\backsim \pi^{f}}}
\left [A^{c}\left(s,a\right)\right ] \right \|_{q},
    \end{aligned}
    \label{lemma_eq12}
\end{equation}
where $p,q \in [1,+\infty]$ satisfy $\frac{1}{p}+\frac{1}{q}=1$.
Combining Eqs.~\eqref{lemma_eq11} and ~\eqref{lemma_eq12}, we have the following bound
\begin{equation}
    \begin{aligned}
      &  \mathop{\mathbb{E}}_{{s\backsim d^{\pi^{f}}}
    \atop
    {a\backsim \pi^{f}}}
\left [A^{c}\left(s,a\right)\right ]
\leq \\
&\quad\mathop{\mathbb{E}}_{{s\backsim d^{\pi^{c}}}
    \atop
    {a\backsim \pi^{f}}}
\left [A^{c}\left(s,a\right)\right ]+\left \|d^{\pi^{f}}-d^{\pi^{c}} \right \|_{p}
\left \|\mathop{\mathbb{E}}_{
    {a\backsim \pi^{f}}}
\left [A^{c}\left(s,a\right)\right ] \right \|_{q},
    \end{aligned}
    \label{lemma_eq13}
\end{equation}
and
\begin{equation}
    \begin{aligned}
      &  \mathop{\mathbb{E}}_{{s\backsim d^{\pi^{f}}}
    \atop
    {a\backsim \pi^{f}}}
\left [A^{c}\left(s,a\right)\right ]
\geq \\
&\quad\mathop{\mathbb{E}}_{{s\backsim d^{\pi^{c}}}
    \atop
    {a\backsim \pi^{f}}}
\left [A^{c}\left(s,a\right)\right ]-\left \|d^{\pi^{f}}-d^{\pi^{c}} \right \|_{p}
\left \|\mathop{\mathbb{E}}_{
    {a\backsim \pi^{f}}}
\left [A^{c}\left(s,a\right)\right ] \right \|_{q}.
    \end{aligned}
    \label{lemma_eq14}
\end{equation}
We choose $p=1$ and $q=\infty$.
The total variation (TV) distance between the discounted future state distributions $d^{\pi^{c}}$ and $d^{\pi^{f}}$ is expressed using the L1-norm as follows:
\begin{equation}
    \begin{aligned}
        \left\|d^{\pi^{f}} - d^{\pi^{c}}\right\|_{1} = 2D_{TV}\left(d^{\pi^{f}}\parallel d^{\pi^{c}}\right).
    \end{aligned}
    \label{lemma_eq15}
\end{equation}
The infinite norm of the expected value of the advantage function $\mathop{\mathbb{E}}_{{a\backsim \pi^{f}}}\left [A^{c}\left(s,a\right)\right ]$ is denoted as
\begin{equation}
    \begin{aligned}
        \left\|\mathop{\mathbb{E}}_{{a\backsim \pi^{f}}}\left [A^{c}\left(s,a\right)\right ]\right\|_{\infty}=
        \mathop{\max}_{s} \left|\mathop{\mathbb{E}}_{
    {a\backsim \pi^{f}}}
\left [A^{c}\left(s,a\right)\right ]\right|.
    \end{aligned}
    \label{lemma_eq16}
\end{equation}
We let $\epsilon^{\pi^{f}}=\mathop{\max}_{s} \left|\mathop{\mathbb{E}}_{
    {a\backsim \pi^{f}}}
\left [A^{c}\left(s,a\right)\right ]\right|$ for notation simplicity.
By combining Eqs.~\eqref{lemma_eq7},~\eqref{lemma_eq13} and \eqref{lemma_eq14}, we have the following bounds
\begin{equation}
    \begin{aligned}
        &\mathcal{J}^{c}_{\pi^{f}} - \mathcal{J}^{c}_{\pi^{c}}\\
        &\quad\leq \frac{1}{1-\gamma}\left(\mathop{\mathbb{E}}_{{s\backsim d^{\pi^{c}}}
    \atop
    {a\backsim \pi^{f}}}
\left [A^{c}\left(s,a\right)\right ]+2D_{TV}\left(d^{\pi^{f}}\parallel d^{\pi^{c}}\right)\epsilon^{\pi^{f}}\right),
    \end{aligned}
    \label{lemma_eq17_1}
\end{equation}
and
\begin{equation}
    \begin{aligned}
&\mathcal{J}^{c}_{\pi^{f}} - \mathcal{J}^{c}_{\pi^{c}}\\
&\quad\geq \frac{1}{1-\gamma}\left(\mathop{\mathbb{E}}_{{s\backsim d^{\pi^{c}}}
    \atop
    {a\backsim \pi^{f}}}
\left [A^{c}\left(s,a\right)\right ]-2D_{TV}\left(d^{\pi^{f}}\parallel d^{\pi^{c}}\right)\epsilon^{\pi^{f}}\right).
    \end{aligned}
    \label{lemma_eq17_2}
\end{equation}
The TV distance of the discounted future state distribution under $\pi^{c}$ and $\pi^{f}$ is bounded by
\begin{equation}
    \begin{aligned}
        D_{TV}\left(d^{\pi^{f}}\parallel d^{\pi^{c}}\right)\leq
        \frac{\gamma}{1-\gamma}\mathop{\mathbb{E}}_{s\backsim d^{\pi^{c}}}\left[D_{TV}\left(\pi^{f}\parallel \pi^{c}\right)[s]\right],
    \end{aligned}
    \label{lemma_eq18}
\end{equation}
where $D_{TV}\left(\pi^{f}\parallel \pi^{c}\right)[s] = (1/2)\mathop{\sum}_{a}{\left|\pi^{f}(a\mid s)-\pi^{c}(a\mid s)\right|}$.
Therefore, the bounds of Eqs.~\eqref{lemma_eq17_1} and \eqref{lemma_eq17_2} can be rewritten as 
\begin{equation}
    \begin{aligned}
        &\mathcal{J}^{c}_{\pi^{f}} - \mathcal{J}^{c}_{\pi^{c}} \leq \\
        &\frac{1}{1-\gamma}\left(\mathop{\mathbb{E}}_{{s\backsim d^{\pi^{c}}}
    \atop
    {a\backsim \pi^{f}}}
\left [A^{c}\left(s,a\right)\right ]+\frac{2\gamma}{1-\gamma}\mathcal{E}_{TV}\left(\pi^{f}\parallel \pi^{c}\right)\epsilon^{\pi^{f}}\right),
    \end{aligned}
    \label{lemma_eq19_1}
\end{equation}
and
\begin{equation}
    \begin{aligned}
&\mathcal{J}^{c}_{\pi^{f}} - \mathcal{J}^{c}_{\pi^{c}}\geq\\
& \frac{1}{1-\gamma}\left(\mathop{\mathbb{E}}_{{s\backsim d^{\pi^{c}}}
    \atop
    {a\backsim \pi^{f}}}
\left [A^{c}\left(s,a\right)\right ]-\frac{2\gamma}{1-\gamma}\mathcal{E}_{TV}\left(\pi^{f}\parallel \pi^{c}\right)\epsilon^{\pi^{f}}\right),
    \end{aligned}
    \label{lemma_eq19_2}
\end{equation}
respectively, where 
\begin{equation}
    \begin{aligned}
        \mathcal{E}_{TV}\left(\pi^{f}\parallel \pi^{c}\right)=\mathop{\mathbb{E}}_{s\backsim d^{\pi^{c}}}\left[D_{TV}\left(\pi^{f}\parallel \pi^{c}\right)[s]\right].
    \end{aligned}
\end{equation}
By applying Pinsker's inequality, the expected value of the TV distance under $\pi^{c}$ and $\pi^{c}$ is bounded by
\begin{equation}
    \begin{aligned}
     \mathop{\mathbb{E}}_{s\backsim d^{\pi^{c}}}\left[D_{TV}\left(\pi^{f}\parallel \pi^{c}\right)[s]\right] \leq
    \mathop{\mathbb{E}}_{s\backsim d^{\pi^{c}}}\left[\sqrt{\frac{1}{2}D_{KL}\left(\pi^{f}\parallel \pi^{c}\right)[s]}\right],
    \end{aligned}
    \label{lemma_eq20}
\end{equation}
where $D_{KL}\left(\pi^{f}\parallel \pi^{c}\right)$ is the KL divergence between $\pi^{f}$ and $\pi^{c}$.
The KL divergence factor constrains the variation of the UAV policy within a specified region of the effectiveness objective while optimizing the UAV policy using the lifetime-aware objective at each iteration.
This constraint ensures that the maximum divergence of the probability distribution between $\pi^{f}$ and $\pi^{c}$ is less than $\delta$:
\begin{equation}
    \begin{aligned}
        \max \mathop{\mathbb{E}}_{s\backsim d^{\pi^{c}}}\left[D_{KL}\left(\pi^{f}\parallel \pi^{c}\right)\right]\leq \delta .
    \end{aligned}
\end{equation}
Therefore, bounds~\eqref{lemma_eq19_1} and~\eqref{lemma_eq19_2} can be derived as
\begin{equation}
    \begin{aligned}
        &\mathcal{J}^{c}_{\pi^{f}} - \mathcal{J}^{c}_{\pi^{c}} \leq \frac{1}{1-\gamma}\mathop{\mathbb{E}}_{{s\backsim d^{\pi^{c}}}
    \atop
    {a\backsim \pi^{f}}}
\left [A^{c}\left(s,a\right)\right ]+\frac{\sqrt{2\delta}\gamma \epsilon^{\pi^{f}}}{\left(1-\gamma\right)^{2}},
    \end{aligned}
    \label{lemma_eq22_1}
\end{equation}
and
\begin{equation}
    \begin{aligned}
&\mathcal{J}^{c}_{\pi^{f}} - \mathcal{J}^{c}_{\pi^{c}}\geq 
\frac{1}{1-\gamma}\mathop{\mathbb{E}}_{{s\backsim d^{\pi^{c}}}
    \atop
    {a\backsim \pi^{f}}}
\left [A^{c}\left(s,a\right)\right ]-\frac{\sqrt{2\delta}\gamma \epsilon^{\pi^{f}}}{\left(1-\gamma\right)^{2}},
    \end{aligned}
    \label{lemma_eq22_2}
\end{equation}
respectively.
\end{proof}
\section{Performance Evaluation}
\label{numerical result}
We conduct a comparative analysis of the proposed GADC-based RL and state-of-the-art approaches, including the traditional multi-agent DDPG (MADDPG), GAT-based MADDPG, and exhaustive search (ES). 
Notably, the GAT-based MADDPG algorithm utilizes the same GAT architecture and memory units as the proposed GADC.
Specifically, the reward structures for ES, MADDPG, and GAT-based MADDPG are formulated as:
\begin{equation}
    \begin{aligned}
        \hat{r} = \varphi r^{c} + (1-\varphi) r^{f},
    \end{aligned}
\end{equation}
where $\varphi \in (0, 1)$ denotes the coefficient that balances the two objectives.
Given that, the action space grows exponentially as $|\iota|^{N}$, where $|\iota|$ is the individual action space size, ES becomes computationally impractical.
To address this limitation, we implement $10^7$ random action selections at each time step, selecting the one that yields the highest reward as the ES reward.
The hardware specifications and software versions used in our simulations are detailed in Table~\ref{tab:hardware}, with the simulation parameters provided in Table~\ref{tab:parameters}.

\begin{table}[!t]
    \centering
    \caption{Software \& hardware specifications}
    \begin{tabular}{c|l}
    \hline \hline
      \bfseries      \textbf{Device/Software} & \textbf{Specifications}   \\ \hline \hline
      CPU & Intel® Xeon® w9-3475X 2.20 GHz \\\hline
      RAM & 256 GB  \\\hline
        GPU & NVIDIA RTX A6000 (48GB) \\\hline
        Storage & SSD 2.0 TB / SATA 16.5 TB  \\\hline
              OS & Ubuntu 22.04.2\\\hline
      Python & 3.10.14  \\\hline
        Pytorch & 2.3.1 \\\hline
        Tensorflow & 2.15.1 \\\hline
        Sionna & 0.18.0 
        \\\hline
    \end{tabular}
    \label{tab:hardware}
\end{table}

\begin{table}[!t]
    \centering
    \caption{Simulation parameters in Sionna}
    \begin{tabular}{c|l}
    \hline \hline
      \bfseries      \textbf{Parameters} & \textbf{Value}   \\ \hline \hline
      The height of UT antenna & 1.5 m\\\hline
      The height of UAV antenna & 100 m  \\\hline
        FFT size & 64 \\\hline
        Number of attention head & 4 \\\hline
        Raytracer sampling & Fibonacci sphere\\\hline
        Max depth of rays & 2 \\\hline
        Launch rays & $10^6$ \\\hline
        Normalize delays & True \\\hline
        Synthetic antenna array & True \\\hline
    \end{tabular}
    \label{tab:parameters}
\end{table}

\begin{figure}[!t]
	\centering
    	\includegraphics[width=.45\textwidth]{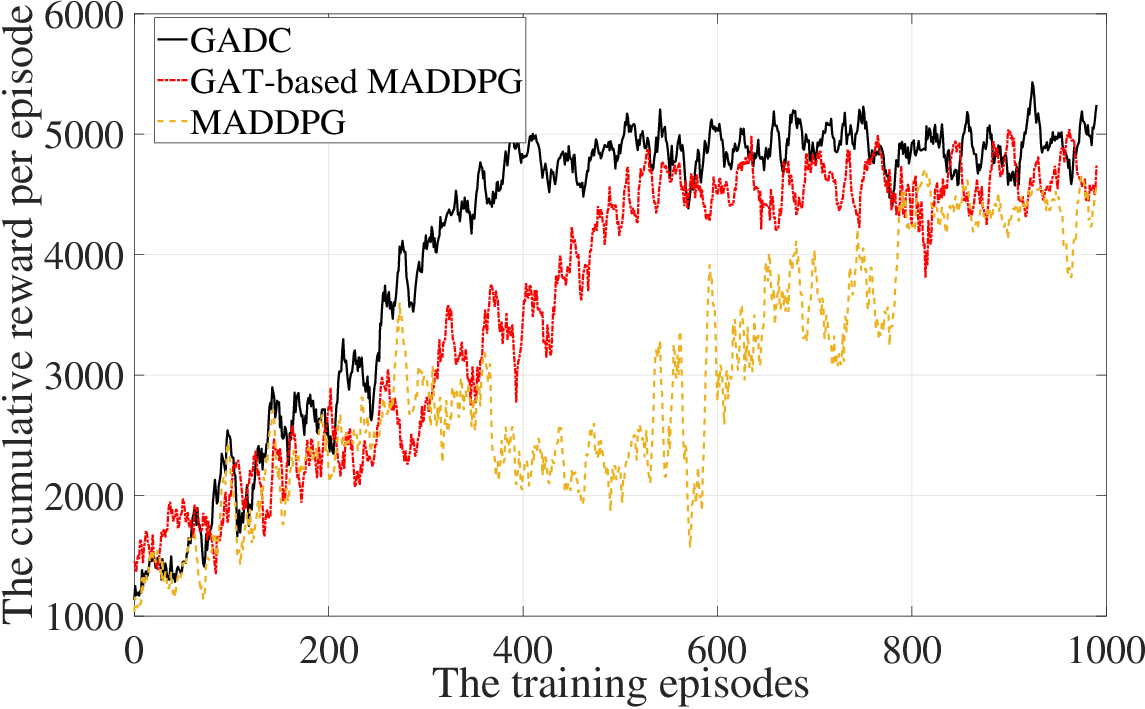}
	 \caption{
	 The convergence performance of different algorithms in the ideal setting.
	 }
  \label{convergenceIdeal}
    \vspace{-10pt}
\end{figure}

\subsection{Performance evaluation in the Python/Matlab simulator}
The simulation environment uses a 200$\times$ 200 unit map with a communication radius $R_s = 10$ units. 
The neural network architecture employs hidden layers of 256 neurons each, and the attention mechanism uses four heads per kernel. 
In the training phase, we configure
the system with 20 UAVs and 120 UTs. 
Unless otherwise stated, the number of UTs remains at 120 for testing, while the number of UAVs varies between 20 to 40.

\begin{figure*}
	\centering
	\subfigure[The impact on the lifetime performance.
] {	\includegraphics[width=.475\textwidth]{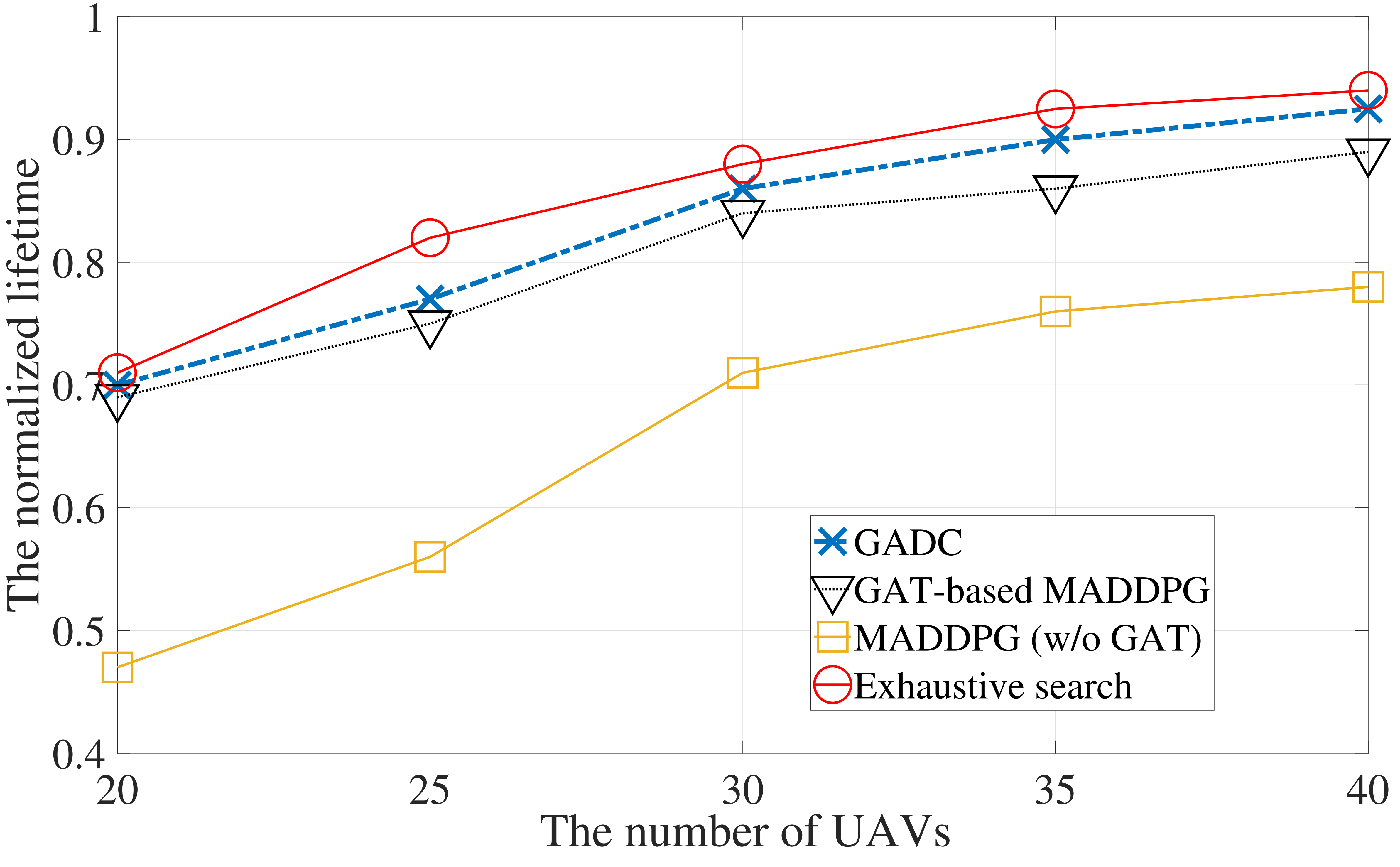}
        	\label{result_lifetime}
	 }
\subfigure[The impact on the effective coverage performance.] {	\includegraphics[width=.475\textwidth]{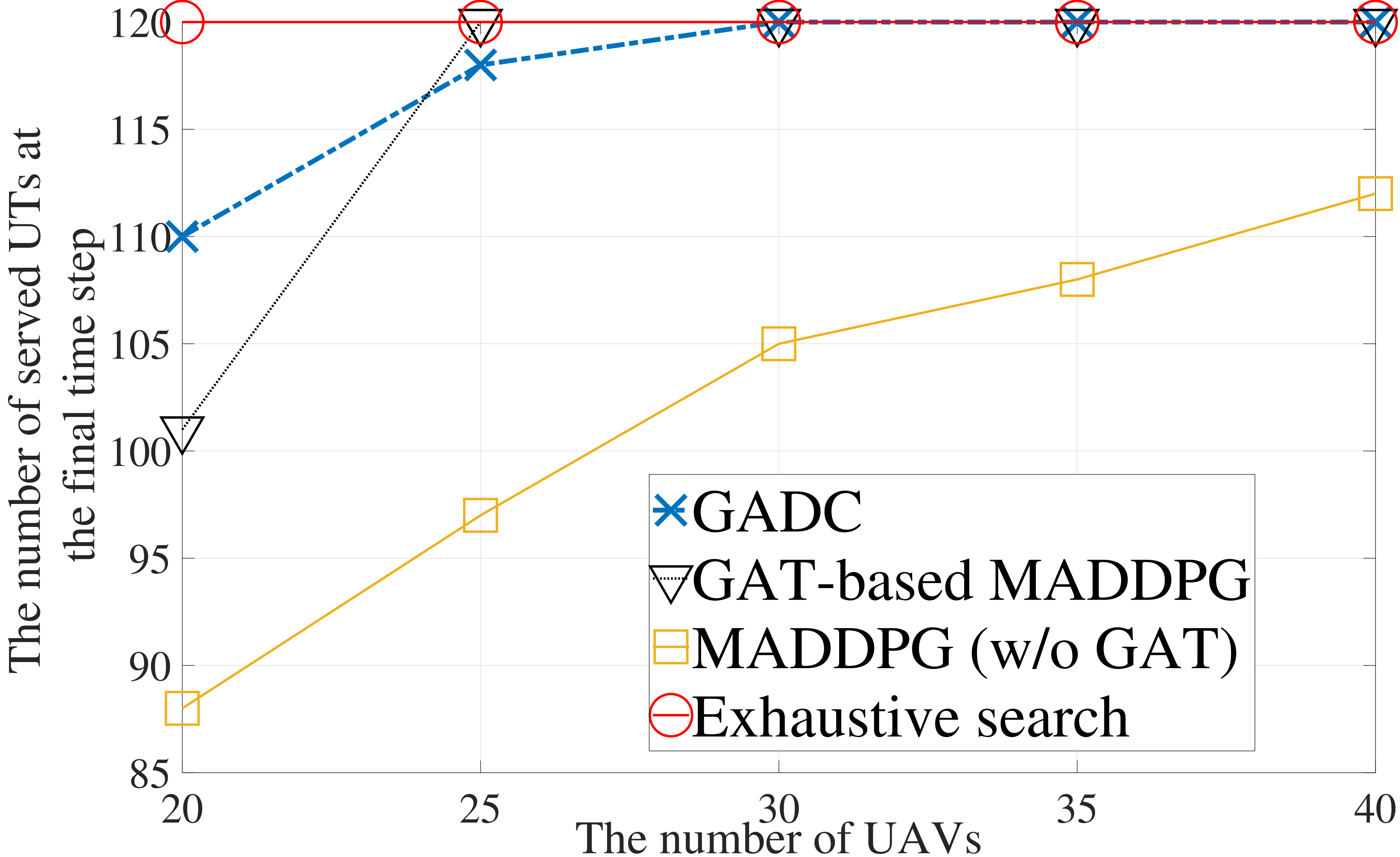}
        	\label{results_effective}
	 }
	 \caption{
	 The impact of the number of UAVs on different objectives.
	 }
     \vspace{-10pt}
\end{figure*}

\begin{figure*}[!t]
	\centering
	\subfigure[The impact of $\epsilon$ on the lifetime-aware objective.] {
    	\includegraphics[width=.475\textwidth]{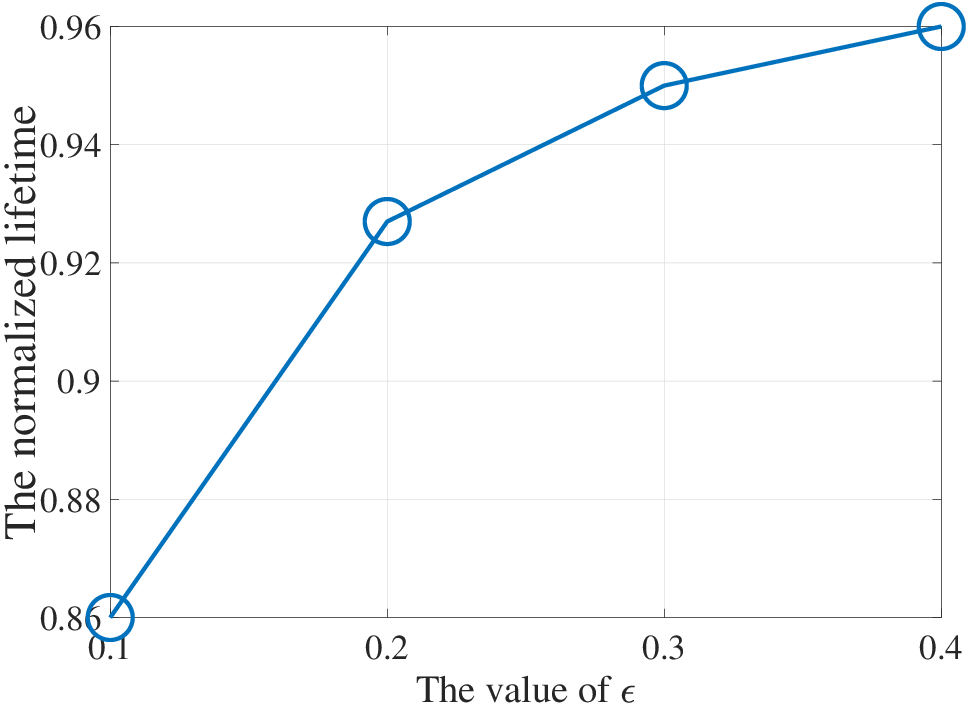}
        	\label{result_epsilon_lifetime}
	 }
	 \subfigure[The impact of $\epsilon$ on the effectiveness objective.] {
        	\includegraphics[width=.475\textwidth]{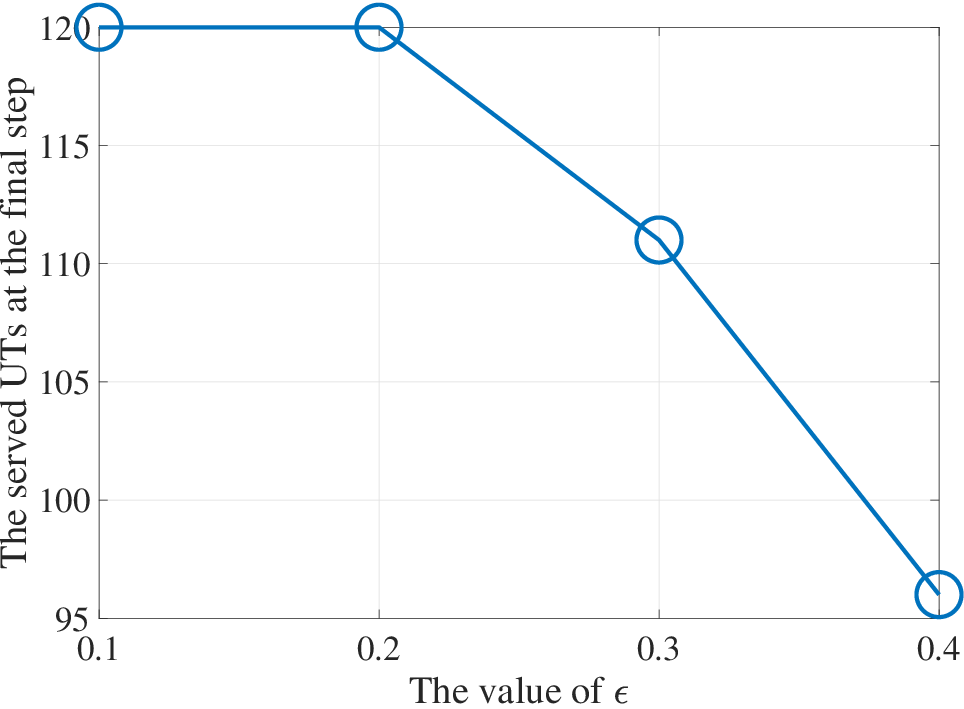}
        	\label{result_epsilon_effective}
	 }
	 \caption{
	 The impact of $\epsilon$ on the proposed GADC.
	 }
     \vspace{-15pt}
\end{figure*}

Fig.~\ref{convergenceIdeal} compares the convergence performance of different algorithms. 
Notably, GADC achieves convergence at 500 episodes, while the GAT-based MADDPG and MADDPG require more than 700 episodes to converge.
While the benchmark algorithms eventually converge to similar reward levels due to careful selection of coefficient $\varphi$ for UAV lifetime performance, the proposed GADC-based approach exhibits faster convergence without the need of careful selection of the weighting coefficient.
Moreover, the proposed GADC shows fluctuations around episode 800, likely due to continued exploration post-convergence as the learning agents (UAVs) balance exploration and exploitation.
The superior performance of GADC over conventional MADDPG in both convergence speed and cumulative reward can be attributed to the GAT and memory units enriching the UAVs' information.

\begin{figure*}[!t]
	\centering
 \subfigure[The impact of the $\varphi$ on the GAT-based MADDPG.] {	\includegraphics[width=.45\textwidth]{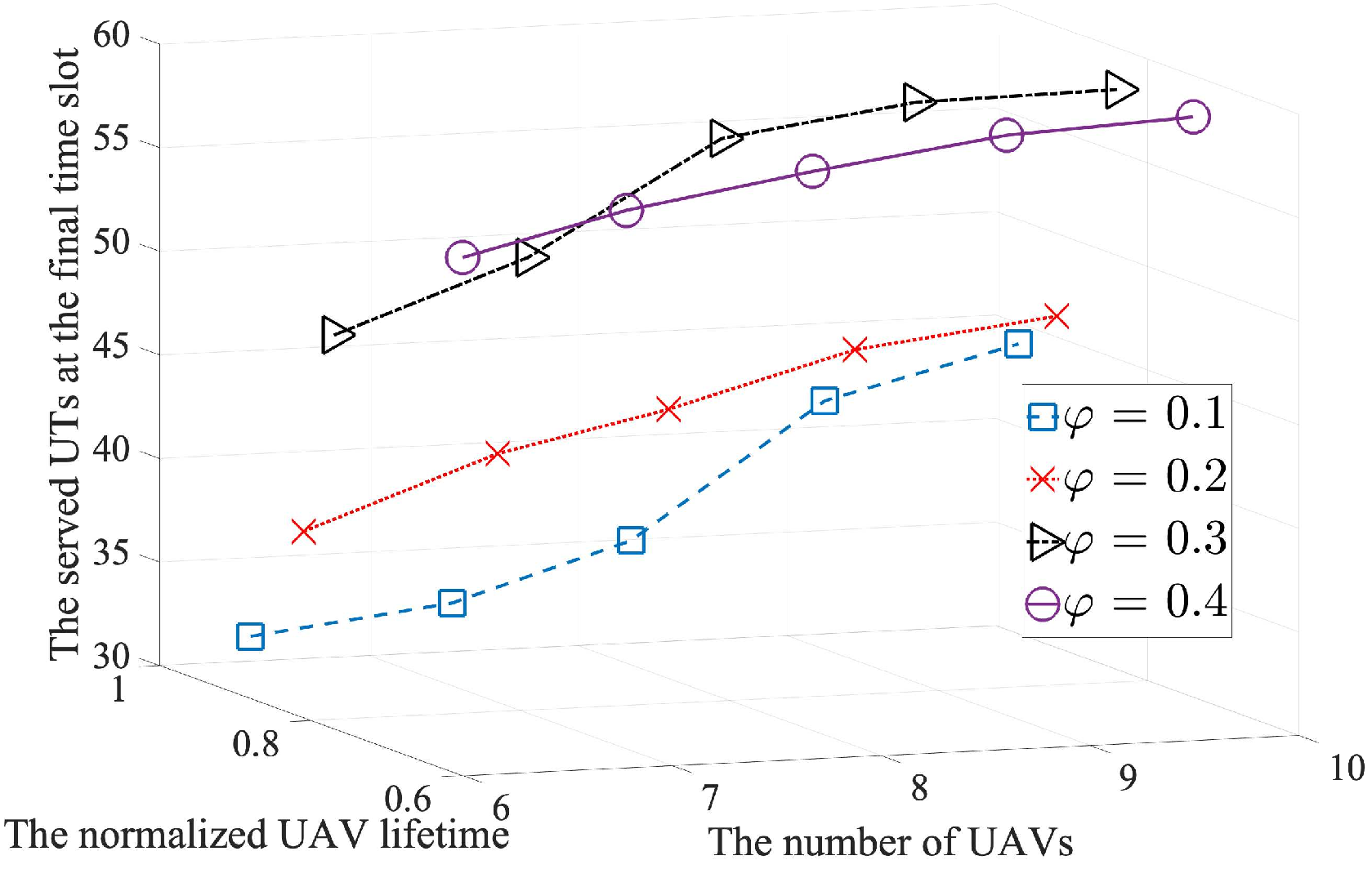}
        	\label{varphi}
	 }
	 \subfigure[The impact of the $\epsilon$ on the proposed GADC.] {
        	\includegraphics[width=.45\textwidth]{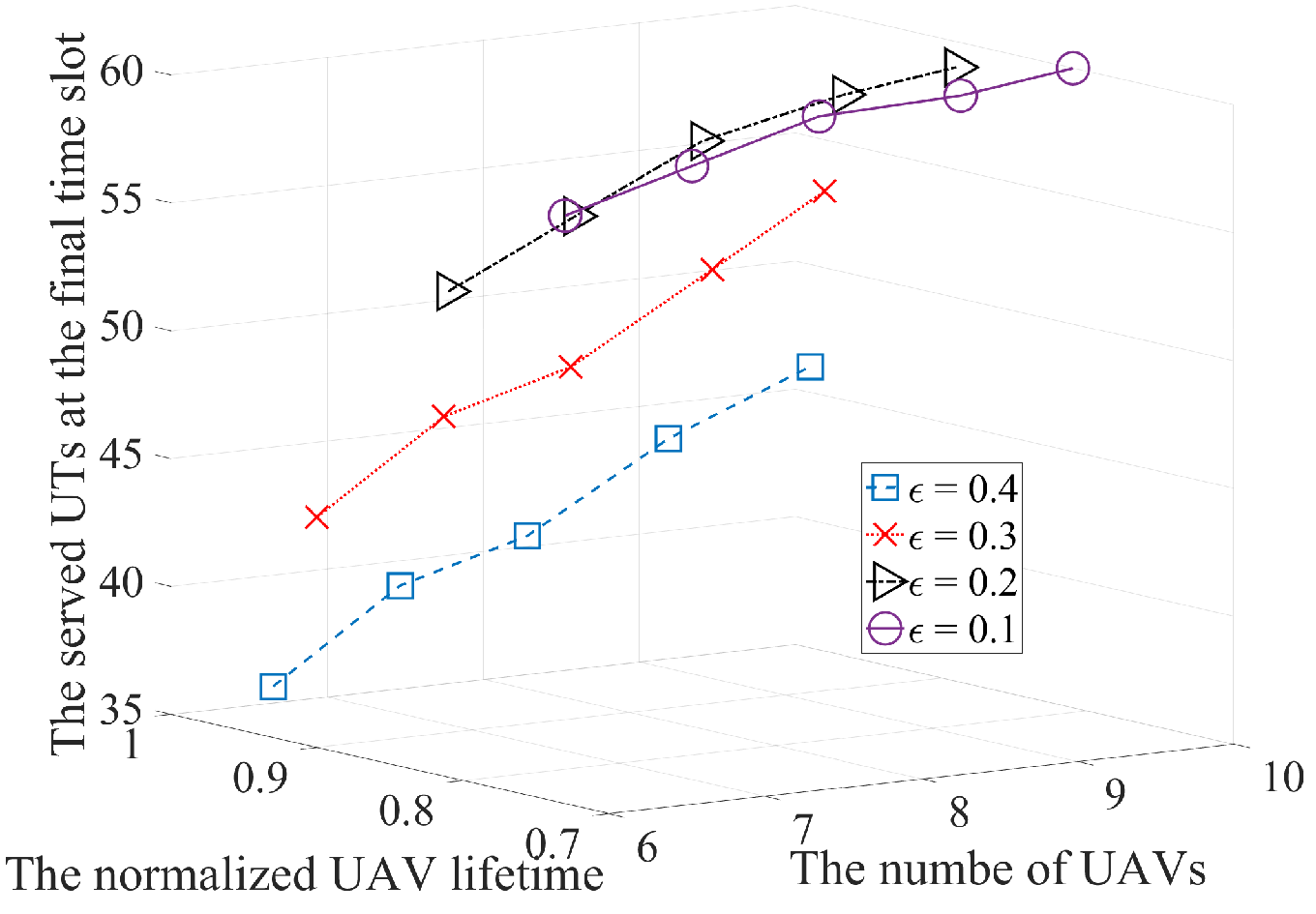}
        	\label{epsilon}
	 }
	 \caption{
	 The trade-off performance comparison between the GAT-based MADDPG and the proposed GADC.
	 }
  \label{tradeoff}
  \vspace{-10pt}
\end{figure*}

Fig.~\ref{result_lifetime} illustrates the UAV lifetime of the proposed GADC alongside benchmark algorithms as the number of UAVs increases from 20 to 40. 
The figure reveals that the proposed GADC and GAT-based MADDPG significantly outperforms the conventional MADDPG, approaching the optimal solution achieved by ES. 
The superior performs stems from the GAT architecture's ability to aggregate information from indirectly connected UAVs, whereas conventional MADDPG struggles with partial observations.
As the number of UAV-BSs increases, all algorithms show improved lifetime because higher UAV counts reduce the average service load per UAV, resulting in decreased movement frequency.
The reduced movement requirements naturally lead to extended UAV operational lifetimes.

\begin{figure*}[!t]
	\centering
 \subfigure[The initial position] {
    	\includegraphics[width=.3\textwidth]{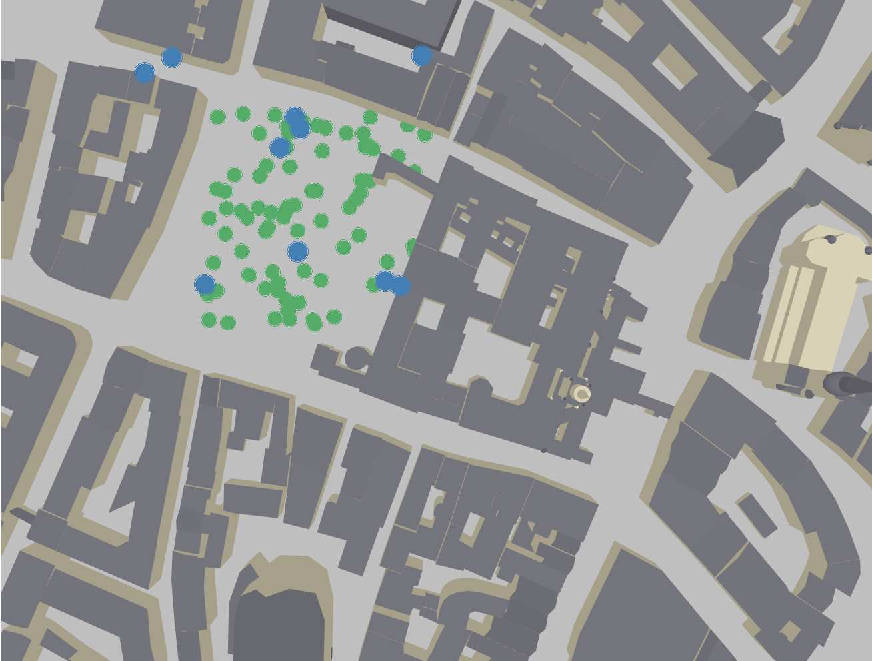}
        	\label{initial_postion_in_LoS}
	 }
	\subfigure[The proposed GADC] {
    	\includegraphics[width=.3\textwidth]{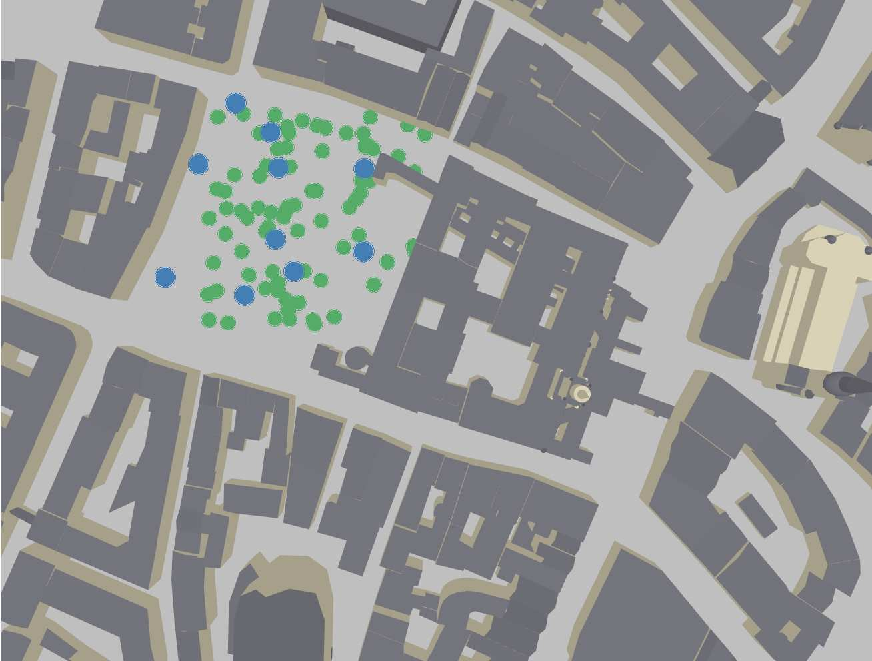}
        	\label{space_GAT}
	 }
	 \subfigure[MADDPG] {
        	\includegraphics[width=.3\textwidth]{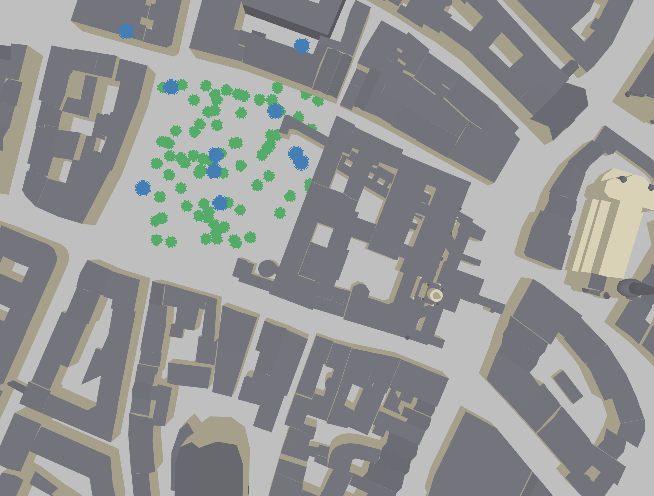}
        	\label{space_DDPG}
	 }
	 \caption{
	 The deployment of UAVs when UTs are in an open area}
  \label{scenario_space}
  \vspace{-10pt}
\end{figure*}

Fig.~\ref{results_effective} compares the service coverage between GADC and the benchmark algorithms. 
Since each UT position remains fixed throughout the episodes, UAVs tend to stabilize their positions to minimize energy consumption once they reach their policy-determined optimal deployment. 
The figure demonstrates that the ES approach achieves complete UT coverage across all scenarios from 20 to 40 UAVs. 
Remarkably, the proposed GADC and GAT-based MADDPG algorithms ensure coverage for all UTs when the number of UAVs exceeds 30. 
In contrast, the conventional MADDPG fails to provide favorable service coverage, even with 40 UAVs.

Figs.~\ref{result_epsilon_lifetime} and~\ref{result_epsilon_effective} illustrate the impact of the parameter $\epsilon$ on the lifetime-aware and effectiveness objectives, respectively. 
It is evident that an increase in $\epsilon$ correlates with enhanced lifetime performance, at the expense of the number of served UTs in the service area. 
This phenomenon occurs because a smaller $\epsilon$ tightly constrains the lifetime-aware policy to the effectiveness policy, resulting in conservative updates that closely follow the effectiveness objective. 
In contrast, a larger $\epsilon$ permits greater divergence between the effectiveness and lifetime-aware policies, allowing more aggressive updates in the lifetime-aware policy that can significantly deviate from the effectiveness goal.
Since the effectiveness objective has priority over the lifetime-aware objective in our system model, we set $\epsilon=0.2$ to balance UAV battery lifetime while ensuring comprehensive UT service coverage.

Fig.~\ref{varphi} shows the trade-off between the lifetime-aware and effectiveness objectives for the GAT-based MADDPG algorithm across different coefficients $\varphi$. 
As $\varphi$ increases from 0.2 to 0.3, there is a significant increase in the number of UTs served but a corresponding decline in lifetime. 
The impact becomes less pronounced as $\varphi$ further increases from 0.3 to 0.4.
These varying outcomes with different $\varphi$ values highlight the importance of careful coefficient selection based on expert knowledge.
Fig.~\ref{epsilon} shows the influence of $\epsilon$ on the dual-objective optimization process.
Increasing $\epsilon$ leads to improved UAV lifetime but decreased UT service coverage. 
Unlike $\varphi$'s influence shown in Fig.~\ref{varphi}, $\epsilon$'s impact appears more linear and predictable. Consequently, it is much easier for GADC to steer the tradeoff between the dual objectives by tuning $\epsilon$.

\subsection{Performance evaluation in the digital-twin simulator}
In contrast to the obstacle-free ideal environment simulated in the last subsection, the digital twin environment models a realistic urban landscape based on Munich's 3D map. 
We employ ray tracing to accurately calculate the channel power between the UAVs and the UTs.
In open areas, the LoS links predominates; however, within urban settings, the communication link is established through a complex interplay of multiple reflections and refractions.
Consequently, we assess the proposed GADC and compare its performance against state-of-the-art algorithms.
In the training phase, we configure the system with five UAVs and 40 UTs. 
The small number of UAVs and UTs during training is due to the substantial computational resources required for ray tracing calculations. 
In the subsequent testing phase, the system scales up to 10 UAVs and 60 UTs by using averaged weights and parameters from training models to create a new UAV control model. 

\begin{figure*}[!t]
	\centering
 \subfigure[The initial position] {	\includegraphics[width=.3\textwidth]{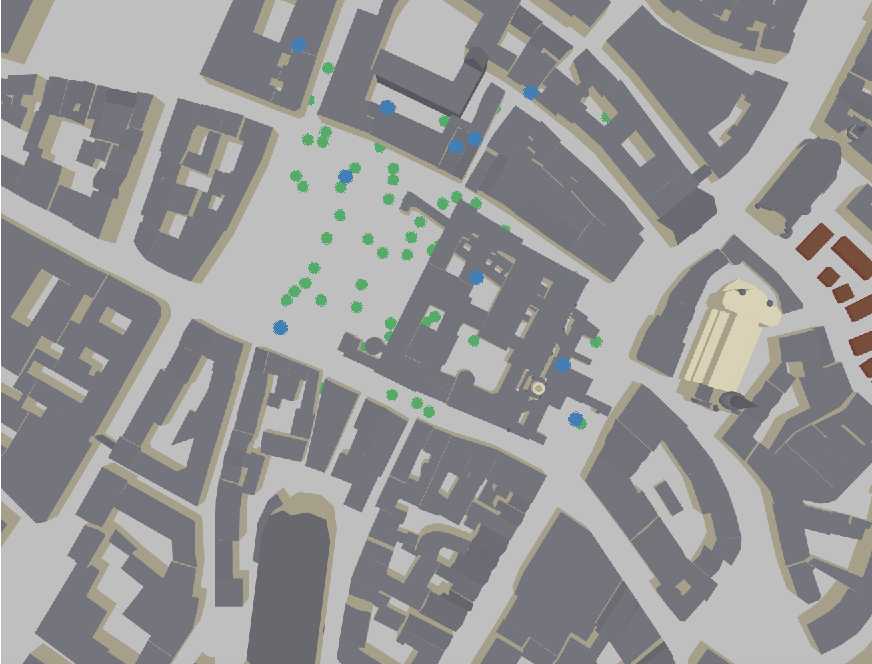}
        	\label{initial_position_ob}
	 }
	\subfigure[The proposed GADC] {
    	\includegraphics[width=.3\textwidth]{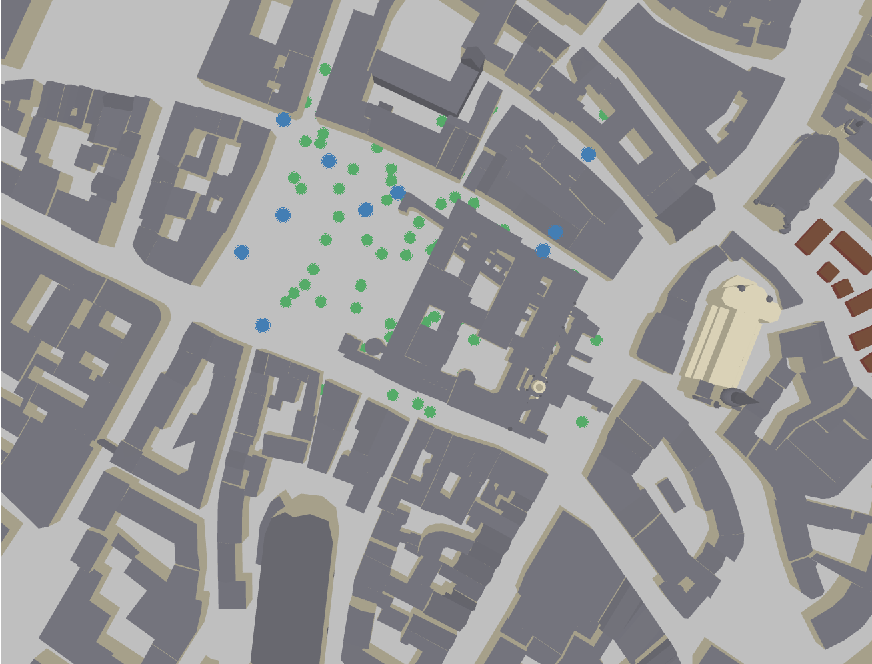}
        	\label{ob_GAT}
	 }
	 \subfigure[MADDPG] {
        	\includegraphics[width=.3\textwidth]{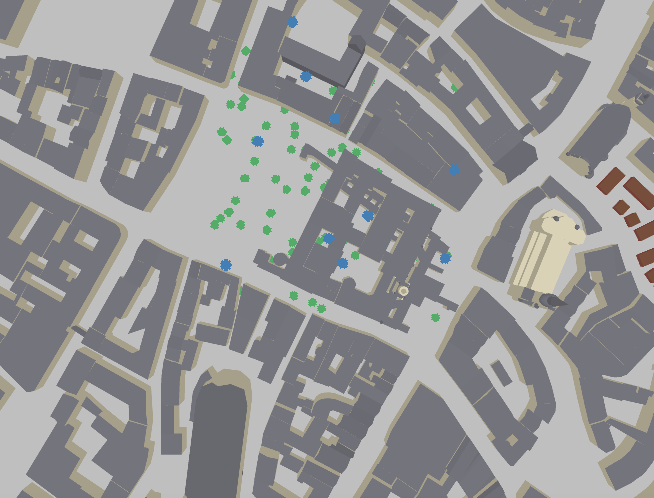}
        	\label{ob_DDPG}
	 }
	 \caption{
	 The deployment of UAVs at the final step when in an urban scenario
	 }
  \label{scenario_ob}
  \vspace{-10pt}
\end{figure*}

Fig.~\ref{scenario_space} compares the UAV deployment across various algorithms when the UTs are located in an open area (square) in the city. 
The blue dots represent the UAVs, while the green dots represent the arbitrarily located UTs. 
Fig.~\ref{initial_postion_in_LoS} illustrates the initial random positions of the UTs and UAVs.
Figs.~\ref{space_GAT} and \ref{space_DDPG} depict the final UAV deployments of the proposed GADC and the conventional MADDPG algorithms, respectively.
The figures show that GADC achieves near-uniform UAV distribution above the UTs, while conventional MADDPG positions UAVs at the outer edges of the coverage area, farther from the central UT concentration. 
Such disparity arises because MADDPG-controlled UAVs cannot share local observations, limiting the UAVs' ability to detect areas with high UT density.
Finally, since GAT-based MADDPG's final deployment performance closely resembles GADC's, its simulation results are omitted from this comparison for simplicity.

Fig.~\ref{scenario_ob} illustrates the UAV deployment when some UTs are distributed in narrow streets and between buildings.
In this context, some UTs are not able to establish direct links with the UAVs.
They may have multiple reflections and refractions to achieve connectivity. 
Fig.~\ref{initial_position_ob} shows the initial positions of the UTs and UAVs, whereas Figs.~\ref{ob_GAT} and \ref{ob_DDPG} depict UAVs' deployment positions of the proposed GADC and MADDPG at the final time step, respectively.
GADC-controlled UAVs maintain a clustered distribution, staying within the observation range of their nearest neighbors to maintain connectivity.
In contrast, MADDPG-controlled UAVs show a dispersed distribution, with most UAVs positioned outside each other's communication range. 
This disparity stems from GAT's ability to enable UAVs to identify high UT density regions by aggregating observations from neighboring UAVs, allowing them to prioritize movement toward these densely populated areas.

\begin{figure}[!t]
	\centering	\includegraphics[width=.475\textwidth]{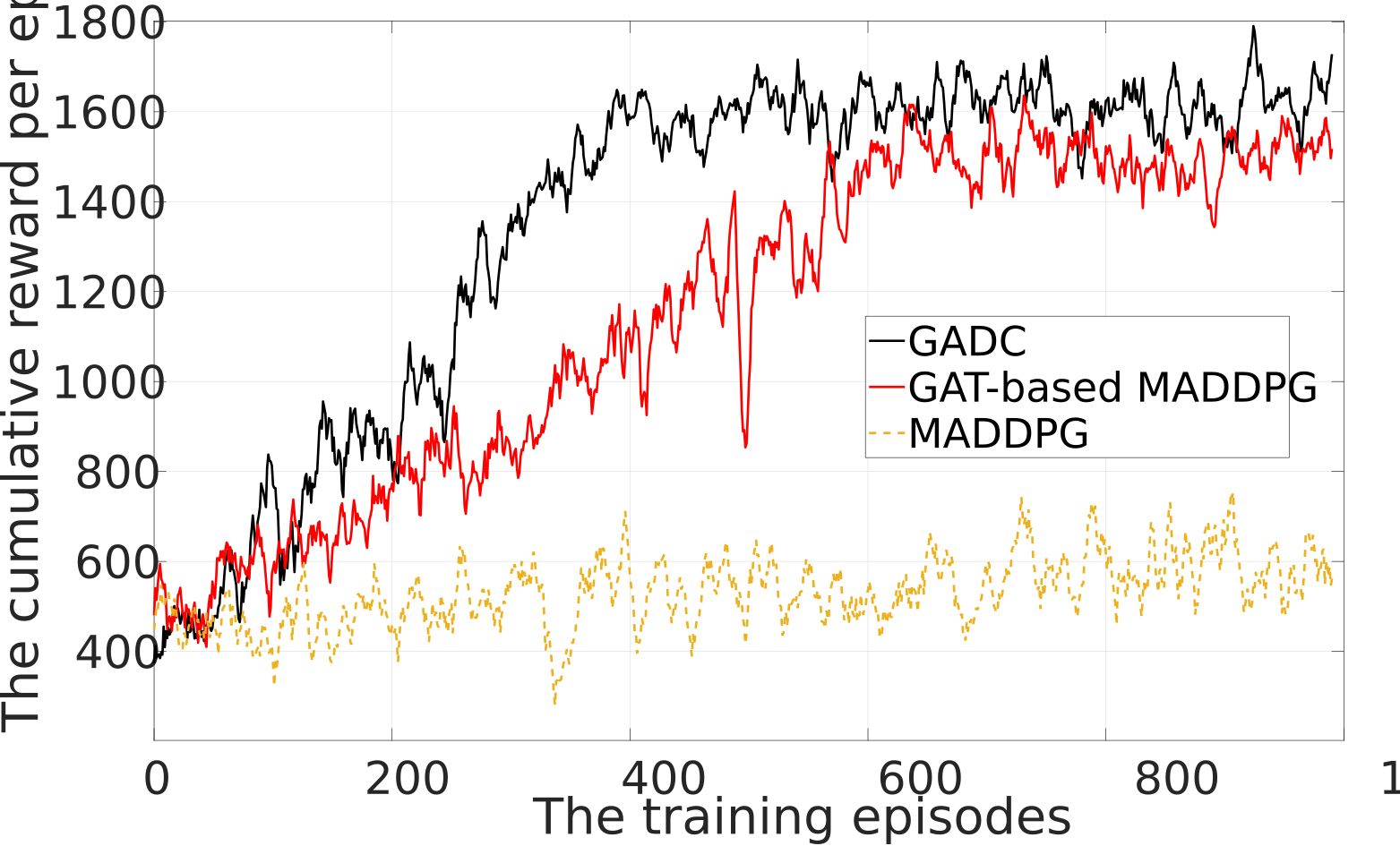}
	 \caption{
	 The convergence performance of different algorithms in the NVIDIA Sionna simulator.
	 }
  \label{converagenceSionna}
\end{figure}

As illustrated in Fig.~\ref{converagenceSionna}, the proposed GADC significantly outperforms the benchmark algorithms in convergence speed. 
It achieves an episode reward ranging from 1,500 to 1,600 in approximately 400 episodes, whereas the GAT-based MADDPG requires about 600 episodes to achieve the same reward level. 
Note again that the proposed GADC optimizes the dual objectives independently, providing a more precise optimization trajectory to the weighted single-reward optimization.
Moreover, the traditional MADDPG fails to converge and exhibits considerable fluctuations, as limited observations hinder the strategic optimization of distributed UAVs.

One may notice that both the proposed GADC and the GAT-based MADDPG converge faster in Fig.~\ref{converagenceSionna} than in Fig.~\ref{convergenceIdeal}. 
This improvement is due to the reduced scale during training (10 UAVs and 40 UTs, down from 20 UAVs and 120 UTs). 
The good testing performance in this subsection demonstrates the strong generalization capability of GAT-based learning, as systems trained at smaller scales can effectively scale up to larger deployments. 
This implies good generalizability of GAT-based learning, which allows a system trained on a small scale to be generalized to one with a larger scale.
\section{Conclusions}
\label{Conclusions}
This study focused on optimizing dual objectives in multi-UAV systems: maximizing user terminal (UT) coverage while extending UAV battery longevity. 
We developed the Graph Attention-based Decentralized Actor-Critic (GADC) method, which leverages Graph Attention Network layers to aggregate information from neighboring UAVs to efficiently explore unobserved areas. 
By implementing dual critics, the GADC achieves rapid convergence toward optimal solutions for both objectives independently. We tested the proposed scheme in NVidia Sionna, where the UAV-UT channels are generated by ray tracing. 
Numerical results demonstrate that GADC significantly improves both system lifetime and service coverage compared to existing approaches.

\ifCLASSOPTIONcaptionsoff
  \newpage
\fi

\bibliographystyle{IEEEtran}
\bibliography{main.bib}

\begin{thebibliography}{10}
\providecommand{\url}[1]{#1}
\csname url@samestyle\endcsname
\providecommand{\newblock}{\relax}
\providecommand{\bibinfo}[2]{#2}
\providecommand{\BIBentrySTDinterwordspacing}{\spaceskip=0pt\relax}
\providecommand{\BIBentryALTinterwordstretchfactor}{4}
\providecommand{\BIBentryALTinterwordspacing}{\spaceskip=\fontdimen2\font plus
\BIBentryALTinterwordstretchfactor\fontdimen3\font minus \fontdimen4\font\relax}
\providecommand{\BIBforeignlanguage}[2]{{%
\expandafter\ifx\csname l@#1\endcsname\relax
\typeout{** WARNING: IEEEtran.bst: No hyphenation pattern has been}%
\typeout{** loaded for the language `#1'. Using the pattern for}%
\typeout{** the default language instead.}%
\else
\language=\csname l@#1\endcsname
\fi
#2}}
\providecommand{\BIBdecl}{\relax}
\BIBdecl

\bibitem{9697395}
Z.~Ye\emph{,~et~al.}, ``Multi-{UAV} navigation for partially observable communication coverage by graph reinforcement learning,'' \emph{IEEE Trans. Mob. Comput.}, vol.~22, no.~7, pp. 4056--4069, Jan. 2023.

\bibitem{9687317}
Z.~Yang\emph{,~et~al.}, ``Online trajectory and resource optimization for stochastic {UAV}-enabled {MEC} systems,'' \emph{IEEE Trans. Wireless Commun.}, vol.~21, no.~7, pp. 5629--5643, Jul. 2022.

\bibitem{9763515}
Y.~Liu\emph{,~et~al.}, ``Fair and energy-efficient coverage optimization for {UAV} placement problem in the cellular network,'' \emph{IEEE Trans. Commun.}, vol.~70, no.~6, pp. 4222--4235, Jun. 2022.

\bibitem{1687734}
R.~Madan\emph{,~et~al.}, ``Distributed algorithms for maximum lifetime routing in wireless sensor networks,'' \emph{IEEE Trans. Wireless Commun.}, vol.~5, no.~8, pp. 2185--2193, Aug. 2006.

\bibitem{9892688}
X.~Zhang\emph{,~et~al.}, ``Cooperative trajectory design of multiple {UAV} base stations with heterogeneous graph neural networks,'' \emph{IEEE Trans. Wireless Commun.}, vol.~22, no.~3, pp. 1495--1509, Mar. 2023.

\bibitem{pmlr-v37-schulman15}
J.~Schulman\emph{,~et~al.}, ``Trust region policy optimization,'' in \emph{Proc. the 32nd Int. Conf. Mach. Learn. (ICML)}, vol.~37, Lille, France, Jul. 2015, pp. 1889--1897.

\bibitem{DBLP:journals/corr/SchulmanWDRK17}
------, ``Proximal policy optimization algorithms,'' \emph{arXiv preprint arXiv:1707.06347}, 2017.

\bibitem{10051712}
H.~Peng\emph{,~et~al.}, ``Energy harvesting reconfigurable intelligent surface for {UAV} based on robust deep reinforcement learning,'' \emph{IEEE Trans. Wireless Commun.}, vol.~22, no.~10, pp. 6826--6838, Oct. 2023.

\bibitem{10102429}
L.~T. Hoang\emph{,~et~al.}, ``Deep reinforcement learning-based online resource management for {UAV}-assisted edge computing with dual connectivity,'' \emph{IEEE/ACM Trans. Netw.}, vol.~31, no.~6, pp. 2761--2776, Dec. 2023.

\bibitem{10.1145/3653451}
C.~Li\emph{,~et~al.}, ``Deep reinforcement learning-based mining task offloading scheme for intelligent connected vehicles in {UAV}-aided {MEC},'' \emph{ACM Trans. Des. Autom. Electron. Syst.}, vol.~29, no.~3, pp. 1--29, May 2024.

\bibitem{9814972}
Z.~Yang\emph{,~et~al.}, ``Dynamic offloading and trajectory control for {UAV}-enabled mobile edge computing system with energy harvesting devices,'' \emph{IEEE Trans. Wireless Commun.}, vol.~21, no.~12, pp. 10\,515--10\,528, Dec. 2022.

\bibitem{10081090}
R.~Liu\emph{,~et~al.}, ``Resource allocation for {NOMA}-enabled cognitive satellite–{UAV}–terrestrial networks with imperfect {CSI},'' \emph{IEEE Trans. Cog. Commun. Netw.}, vol.~9, no.~4, pp. 963--976, Aug. 2023.

\bibitem{10438999}
Y.~Liu\emph{,~et~al.}, ``Space-air-ground integrated networks: Spherical stochastic geometry-based uplink connectivity analysis,'' \emph{IEEE J. Sel. Areas Commun.}, vol.~42, no.~5, pp. 1387--1402, May 2024.

\bibitem{10243608}
Z.~Yin\emph{,~et~al.}, ``{UAV}-assisted secure uplink communications in satellite-supported {IoT}: Secrecy fairness approach,'' \emph{IEEE Internet Things J.}, vol.~11, no.~4, pp. 6904--6915, Feb. 2024.

\bibitem{9446301}
L.~Zhou\emph{,~et~al.}, ``Intelligent {UAV} swarm cooperation for multiple targets tracking,'' \emph{IEEE Internet Things J.}, vol.~9, no.~1, pp. 743--754, Jan. 2022.

\bibitem{9437802}
G.~Raja\emph{,~et~al.}, ``Efficient and secured swarm pattern multi-{UAV} communication,'' \emph{IEEE Trans. Veh. Technol.}, vol.~70, no.~7, pp. 7050--7058, Jul. 2021.

\bibitem{8432464}
C.~H. Liu\emph{,~et~al.}, ``Energy-efficient {UAV} control for effective and fair communication coverage: A deep reinforcement learning approach,'' \emph{IEEE J. Sel. Areas Commun.}, vol.~36, no.~9, pp. 2059--2070, Sep. 2018.

\bibitem{10197291}
Z.~Feng\emph{,~et~al.}, ``Multi-agent reinforcement learning with policy clipping and average evaluation for {UAV}-assisted communication markov game,'' \emph{IEEE Trans. Intell. Transp. Syst.}, vol.~24, no.~12, pp. 14\,281--14\,293, Dec. 2023.

\bibitem{10254323}
Y.~Hou\emph{,~et~al.}, ``{UAV} swarm cooperative target search: A multi-agent reinforcement learning approach,'' \emph{IEEE Trans. Intell. Veh.}, vol.~9, no.~1, pp. 568--578, Jan. 2024.

\bibitem{10.5555/645531.656005}
S.~Kakade\emph{,~et~al.}, ``Approximately optimal approximate reinforcement learning,'' in \emph{Proc. Int. Conf. Mach. Learn.}, San Francisco, CA, Jul. 2002, pp. 267--274.

\end{thebibliography}

\end{document}